\newcommand{\R}{\mathbb{R}}
\newcommand{\E}{\mathbb{E}}
\newcommand{\G}{\mathbb{G}}
\newcommand{\N}{\mathbb{N}}
\newcommand{\Z}{\mathbb{Z}}
\newcommand{\C}{\mathcal{C}}
\newcommand{\F}{\mathcal{F}}
\newcommand{\cR}{\mathcal{R}}
\newcommand{\cL}{\mathcal{L}}
\newcommand{\I}{\mathcal{I}}
\renewcommand{\P}{\mathbb{P}}
\newcommand{\X}{\mathcal{X}}
\renewcommand{\d}{\;dt}
\newcommand{\xqed}[1]{%
  \leavevmode\unskip\penalty9999 \hbox{}\nobreak\hfill
  \quad\hbox{\ensuremath{#1}}}
  \def\ignore#1{}
\def\b1{\mathbbm 1}
\DeclareMathOperator{\sgn}{sgn}
\newtheorem{Theo}{Theorem}[section]
\newtheorem{Lem}[Theo]{Lemma}
\newtheorem{Prop}[Theo]{Proposition}
\theoremstyle{definition}
\newtheorem{Def}[Theo]{Definition}
\newtheorem{Ass}[Theo]{Assumption}
\definecolor{webgreen}{rgb}{0,.5,0}
\theoremstyle{rem}
\newtheorem{rem}[Theo]{Remark}
\numberwithin{equation}{section}
\begin{document}

\title{Numerical solution of  a semilinear parabolic degenerate Hamilton-Jacobi-Bellman  equation with singularity 
} 

\author{Mourad Lazgham}

\thanks{The author acknowledges support by Deutsche Forschungsgemeinschaft through Grant SCHI 500/3-1.}


\dedicatory{\emph{Department of Mathematics, University of Mannheim,  Germany}}

\keywords{Expected utility maximization problem, Hamilton-Jacobi-Bellman equation, Viscosity solution, sub- and superjet.}

\begin{abstract} 
We consider a semilinear parabolic degenerated \emph{Hamilton-Jacobi-Bellman (HJB) equation with singularity} which is related to a stochastic control problem with fuel constraint. The fuel constraint translates into a singular initial condition for the HJB equation. We first propose a transformation based on a  change of variables that gives rise to an equivalent  HJB equation with nonsingular initial condition but irregular coefficients.
We then construct explicit and implicit numerical schemes for solving the transformed HJB equation and prove their convergences by establishing an extension  to the result of  Barles and Souganidis (1991). 
\\\end{abstract}

\maketitle

\section{Introduction}
In this paper, we aim at  constructing a numerical scheme in order to approximate the solution of a semilinear parabolic degenerate Hamilton-Jacobi-Bellman  equation with singularity, which originates  from an expected utility maximization problem with finite fuel constraint, i.e., where initial and terminal conditions are imposed on the  processes  considered; see, e.g.,  \citet{SST10nn}. This appears to be a very difficult task, since we have to face some issues. Let us in the first place enumerate these ones, theoretically.  First, we cannot directly apply  well-known  convergence results such as in \citet{BS91}, since in their work, they consider only bounded functions with no singularity. Indeed, in most of the literature, when dealing with monotone numerical schemes to approximate Hamilton-Jacobi-Bellman equations, like in \citet{BJ02n} (where they discuss the rate of convergence of approximation schemes), or  more recently, in \citet{ZCB12n} (which is a generalization of the framework of Barles and Souganidis), only\emph{ bounded} viscosity solutions are considered. However,  slight modifications in the Barles and Souganidis framework permit us to adapt their model to viscosity solutions with linear asymptotic growth. Moreover,  a classical change of variables formula will allow us to relax the exponential growth requirement, by introducing an auxiliary HJB equation. Nevertheless, we will still face a polynomial growth and, above all,  a singularity at time $0$, so that to the best of our knowledge no well-known convergence results for monotone schemes can be directly applied in our case. Fortunately, to deal with the singularity property, we will be able to prove that our auxiliary value function behaves like a predetermined function at time $0$, i.e.,  the quotient of the auxiliary value function and this predetermined function will be close to one, near the initial condition. In this manner, we will be able to transform again our auxiliary HJB equation, by considering a translated version of the latter one, which will permit us to set a zero function as initial condition. However, even with our relaxed conditions, classical results for monotone numerical schemes cannot be directly applied here, since there remains a term which behaves like $Tf(X_0/T)$, where $f$ is a strictly convex and positive function with at most  polynomial growth.

 Note that there are other ways to approximate nonlinear parabolic equations. For instance, in \citet{ZB04n}, analyzing  generalized finite difference methods, non-monotone converging schemes are established. In \citet{W13},  the convergence is established for some general approximations of the viscosity solutions, provided that a certain optimization problem can be solved in each time step. Unfortunately, here again only bounded viscosity solutions are considered. An alternative approach to approximate nonlinear parabolic PDEs  would be to use Monte Carlo methods, combined with the finite difference method, as  suggested in \citet{FT11}. In their work, they introduce a backward probabilistic scheme that permits to approximate the solution of a nonlinear PDE in two steps. In the first step, the linear part of the PDE is dealt with   by using Monte Carlo simulation applied to a conditional expectation operator. The second step applies a finite difference method to the remaining nonlinear part. Moreover, they  consider viscosity solutions having polynomial or exponential growth. Nevertheless, the second-order parabolic partial differential equation has to fulfill a Lipschitz condition, uniformly in $t$, which cannot be the case in our framework, due to the Fenchel-Legendre term of the auxiliary HJB equation. In addition, as argued in their paper, their results do not apply to general degenerate nonlinear parabolic PDEs, and we therefore cannot use directly their method. 
 
  In order to remedy to those listed issues, we will have to localize the requirements of building  converging monotone schemes; the fact that our second-order term is one-dimensional will be very helpful to us. However, this will lead to some severe Courant-Friedrichs-Lewy (CFL) conditions in the time parameter and, as a consequence, numerical schemes will converge slowly, since the number of time iterations will have to be chosen sufficiently large.   \\
  \section{Modeling framework}
  Let $(\Omega, \F, \P)$ be a probability space with a filtration $(\F_t)_{0\leq t\leq T}$ satisfying the usual conditions. Taking $X_0\in\R^d$, we consider the following expected utility maximization problem
\begin{equation}
V(T,X_0,R_0)=\sup_{\xi\in\dot{\X}^1(T, X_0)}\E\left[u\left(\cR_{T}^{\xi}\right)\right],\label{omp}
\end{equation}
where
\begin{eqnarray*}
\lefteqn{\dot{\X}^1(T,X_0)}\\
&:=&\Big\{\xi\,\big|\,\Big(X^\xi_t:=X_0-\int_0^t \xi_s\;ds\Big)_{t\in[0,T]}\text{ adapted, } t\rightarrow X^\xi_t(\omega) \in\X_{det}(T, X_0)\, \P\text{-a.s.}\Big\}\\
&\bigcap& \Big\{\xi\,\big|\,\E\bigg[\int_{0}^{T}\big(X^\xi_t\big)^\top\sigma X^\xi_t + |b\cdot X^\xi_t-f(\xi_t)|+|\xi_t|\,dt\bigg]<\infty\Big\},
\end{eqnarray*}
\begin{equation*}
\X_{det}(T, X_0)=\left\{X:[0,T]\rightarrow \R^d \; \text{ absolutely continuous,}\; X_0\in\R^ d,\;\text{and}\; X_T=0 \right\},
\end{equation*}
and
\begin{equation*}
\cR_T^{\xi}=R_0+\int_0^T \big(X^\xi_t\big)^\top\sigma\;dB_t +\int_0^T b\cdot X^\xi_t \;dt-\int_0^T f(-\dot{\xi}_t)\;dt,
\label{rp}
\end{equation*}
denotes the revenues over the time interval $[0,T]$ associated to the process $X$.  Here $R_0\in\R$, $B$ is a standard $m$-dimensional Brownian motion starting in $0$ with drift $b\in\mathbb{R}^d$ (which is orthogonal to the kernel of the covariance matrix $\Sigma =\sigma\sigma ^\top$) and volatility matrix $\sigma=(\sigma^{ij})\in\R^{d\times m}$, and the  nonnegative, strictly convex function $f$ has superlinear growth and at most a polynomial growth of degree $p$, i.e.,  there exists $C>0$ such that $$ f(x)\leq C(1+|x|^p)\quad \text{ for all }x\in\R^d,$$and satisfies the two conditions $\lim_{|x|\longrightarrow \infty} \tfrac{f(x)}{|x|}=\infty$ and $f(0)=0$. Further, we will suppose that there exist two positive constants $A_i,i=1,2,$ such that
\begin{equation}
0< A_1\leq-\frac{u''(x)}{u'(x)}\leq A_2 \quad\text{for all } x\in\R.\label{apc}
\end{equation}
This inequality implies that we can assume w.l.o.g. that $0<A_1<1<A_2$,
which gives us the following estimates
 \begin{equation}\exp(-A_1 x)\leq u'(x)\leq \exp(-A_2x)+1\quad \text{ for } x\in\R.\label{ieu'}\end{equation}
 and
\begin{equation}
u_1(x):=\frac1{A_1}-\exp(-A_1x)\geq u(x)\geq -\exp(-A_2x)=: u_2(x).\label{ubd1}
\end{equation}
We refer to \citet{LM15} and \citet{LM15n} for more precisions and for the following results:
\begin{Theo}\label{eos}
Let  $\left(T,X_0,R_0\right)\in\;]0,\infty[\times\R^d\times\R$, then there exists a unique optimal  strategy  $\xi^*\in\dot{\X}^1(T,X_0)$ for the maximization problem \eqref{omp}, which satisfies 
 \begin{equation}
V(T,X_0,R_0)=\sup_{\xi\in\dot{\X}^1(T, X_0)}\E[u(\cR_{T}^{\xi})]=\E\Big[u\big(\cR^{\xi^*}_T\big)\Big],\label{omp1}
\end{equation}
\end{Theo}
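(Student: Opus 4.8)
The plan is to use the \emph{direct method of the calculus of variations}: I would show that \eqref{omp} is a concave maximization problem whose objective is upper semicontinuous along a suitable weakly convergent maximizing sequence, and then deduce uniqueness from the strict convexity of $f$. First I would record the elementary structure. Write $J(\xi):=\E[u(\cR^\xi_T)]$ for $\xi\in\dot{\X}^1(T,X_0)$. The admissible set is convex and nonempty (the deterministic strategy $\xi_t\equiv X_0/T$, for which $X^\xi_t=(1-t/T)X_0$ and $X^\xi_T=0$, meets all the integrability requirements because $X^\xi,\xi$ are bounded and $f$ has polynomial growth), so $V(T,X_0,R_0)>-\infty$, while $V(T,X_0,R_0)\le 1/A_1$ by the bound $u\le u_1\le 1/A_1$ of \eqref{ubd1}. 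Moreover $X^\xi$ is affine in $\xi$, hence so are the stochastic integral $\int_0^T(X^\xi_t)^\top\sigma\,dB_t$ and the drift term $\int_0^T b\cdot X^\xi_t\,dt$, whereas $-\int_0^T f(\xi_t)\,dt$ is concave since $f$ is convex; as $u$ is concave and nondecreasing, $J$ is concave on $\dot{\X}^1(T,X_0)$. Finally, the admissibility condition $\E[\int_0^T(X^\xi_t)^\top\Sigma X^\xi_t\,dt]<\infty$ makes the stochastic integral a square-integrable martingale, so $\E[\cR^\xi_T]=R_0+\E[\int_0^T b\cdot X^\xi_t\,dt]-\E[\int_0^T f(\xi_t)\,dt]$.

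Next I would establish the coercivity estimate that powers the compactness argument. Take a maximizing sequence $(\xi^n)_n$, which we may assume satisfies $J(\xi^n)\ge J(X_0/T)=:c_0>-\infty$. Linearizing $u$ at $0$, $u(x)\le u(0)+u'(0)x$, and using the identity above,
\begin{equation*}
c_0\le J(\xi^n)\le u(0)+u'(0)\Big(R_0+\E\Big[\int_0^T b\cdot X^{\xi^n}_t\,dt\Big]-\E\Big[\int_0^T f(\xi^n_t)\,dt\Big]\Big).
\end{equation*}
Since $|X^{\xi^n}_t|\le|X_0|+\int_0^T|\xi^n_s|\,ds$ and, by superlinearity of $f$, for every $\delta>0$ there is $C_\delta$ with $|x|\le\delta f(x)+C_\delta$, the drift expectation is bounded by $|b|\,T\big(|X_0|+\delta\,\E[\int_0^T f(\xi^n_t)\,dt]+C_\delta T\big)$; choosing $\delta$ small enough to absorb the $f$-term gives the key a priori bound $\sup_n\E[\int_0^T f(\xi^n_t)\,dt]<\infty$. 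With a little more care, using the exponential lower bound for $-u$ implied by \eqref{apc}, one also gets a uniform bound on the second moment of the negative part of $\cR^{\xi^n}_T$, which is what is needed below to handle the quadratic admissibility term for the limit (cf.\ the a priori estimates in \citet{LM15}).

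Then I would pass to the limit. By the de la Vallée-Poussin criterion, superlinearity of $f$ together with the previous bound makes $(\xi^n)_n$ uniformly integrable on $\big([0,T]\times\Omega,\,dt\otimes d\P\big)$, hence relatively weakly compact in $L^1$ (Dunford-Pettis). By Koml\'os' theorem, after passing to a subsequence there are convex combinations $\eta^n\in\conv(\xi^k:k\ge n)$ converging $dt\otimes d\P$-a.e.\ to some $\xi^*$, and by uniform integrability $\eta^n\to\xi^*$ in $L^1$. Then $\xi^*\in\dot{\X}^1(T,X_0)$: progressive measurability is inherited, $\int_0^T\xi^*_s\,ds=\lim_n\int_0^T\eta^n_s\,ds=X_0$ in $L^1(\P)$ so that $X^{\xi^*}_T=0$ and $t\mapsto X^{\xi^*}_t$ is absolutely continuous, $\E[\int_0^T f(\xi^*_t)\,dt]<\infty$ by Fatou and convexity, and the remaining conditions follow from superlinearity of $f$ and, for the quadratic term, from the uniform estimates just mentioned. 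Along a further subsequence $\E[\sup_{t\in[0,T]}|X^{\eta^n}_t-X^{\xi^*}_t|]\to0$, so $\int_0^T|X^{\eta^n}_t-X^{\xi^*}_t|^2\,dt\to0$ in probability and the stochastic and Lebesgue integrals in $\cR^{\eta^n}_T$ converge to those in $\cR^{\xi^*}_T$, while $\liminf_n\int_0^T f(\eta^n_t)\,dt\ge\int_0^T f(\xi^*_t)\,dt$ by Fatou; hence $\limsup_n\cR^{\eta^n}_T\le\cR^{\xi^*}_T$ $\P$-a.s. Since $u$ is continuous, nondecreasing and bounded above, the reverse Fatou lemma gives $\limsup_n J(\eta^n)\le J(\xi^*)$, whereas concavity of $J$ and $\eta^n\in\conv(\xi^k:k\ge n)$ give $J(\eta^n)\ge\inf_{k\ge n}J(\xi^k)\to V(T,X_0,R_0)$; therefore $J(\xi^*)\ge V(T,X_0,R_0)$, and since $\xi^*$ is admissible this is an equality, which establishes existence and \eqref{omp1}.

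Finally, uniqueness: if $\xi^*$ and $\tilde\xi$ are both optimal, set $\bar\xi:=\tfrac12(\xi^*+\tilde\xi)\in\dot{\X}^1(T,X_0)$; strict convexity of $f$ gives $f(\bar\xi_t)\le\tfrac12 f(\xi^*_t)+\tfrac12 f(\tilde\xi_t)$ with strict inequality wherever $\xi^*_t\ne\tilde\xi_t$, so if $\{\xi^*\ne\tilde\xi\}$ had positive $dt\otimes d\P$-measure then $\cR^{\bar\xi}_T\ge\tfrac12\cR^{\xi^*}_T+\tfrac12\cR^{\tilde\xi}_T$ with strict inequality on an event of positive $\P$-measure, whence $J(\bar\xi)>\tfrac12 J(\xi^*)+\tfrac12 J(\tilde\xi)=V(T,X_0,R_0)$ because $u$ is strictly increasing and concave, a contradiction; hence $\xi^*=\tilde\xi$ $dt\otimes d\P$-a.e. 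The hardest part will be the compactness step: $J$ is genuinely nonlinear in $\xi$ and the natural topology is the non-reflexive weak-$L^1$ one, so one must use the de la Vallée-Poussin/Koml\'os machinery both to produce a.e.\ convergence (to push the limit through $u$ via reverse Fatou) and to pass to the limit in the stochastic integral $\int_0^T(X^\xi_t)^\top\sigma\,dB_t$ through its quadratic variation; the most delicate point is checking that the weak limit still satisfies the quadratic admissibility bound $\E[\int_0^T(X^{\xi^*}_t)^\top\Sigma X^{\xi^*}_t\,dt]<\infty$ with only the $L^1$-type control at hand, which is where the sharper a priori estimates of \citet{LM15} enter.
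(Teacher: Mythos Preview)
The paper does not actually prove this theorem: immediately before the statement it writes ``We refer to \citet{LM15} and \citet{LM15n} for more precisions and for the following results,'' and Theorems~\ref{eos}--\ref{cv} are simply quoted without proof. So there is no in-paper argument to compare against.

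That said, your proposal is the standard direct-method argument and is essentially what one expects to find in the cited references: coercivity from the superlinearity of $f$ combined with the upper bound $u\le 1/A_1$, weak $L^1$ relative compactness via de~la~Vall\'ee--Poussin/Dunford--Pettis, Koml\'os convex combinations to upgrade to a.e.\ convergence, lower semicontinuity of $\int_0^T f$ by Fatou, and reverse Fatou for $J$ thanks to the upper boundedness of $u$; uniqueness then follows from strict convexity of $f$ and strict monotonicity of $u$. The places you flag as delicate are the right ones: (i) passing the limit through the stochastic integral requires controlling $\int_0^T (X^{\eta^n}_t)^\top\Sigma X^{\eta^n}_t\,dt$, and your bound $\sup_t|X^{\eta^n}_t-X^{\xi^*}_t|\le\int_0^T|\eta^n_s-\xi^*_s|\,ds\to 0$ in $L^1$ does give a.s.\ convergence along a subsequence, which suffices for convergence in probability of the It\^o integral; (ii) verifying $\E[\int_0^T(X^{\xi^*}_t)^\top\Sigma X^{\xi^*}_t\,dt]<\infty$ for the limit needs an a~priori estimate beyond the $L^1$ bound, and you are right to point to the sharper estimates in \citet{LM15} for this. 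One small wording point: in the uniqueness step you should invoke that $u$ is strictly increasing (which follows from $u'>0$ by \eqref{ieu'}) and concave; strict concavity of $u$ is not asserted in the paper, but strict convexity of $f$ alone already forces the strict inequality $\cR^{\bar\xi}_T>\tfrac12\cR^{\xi^*}_T+\tfrac12\cR^{\tilde\xi}_T$ on a set of positive probability, and strict monotonicity of $u$ is enough to conclude.
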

\begin{Theo}\label{v_r}
The value function is concave and continuously partially differentiable in its third argument $R$, and we have the formula 
\[
V_r(T,X,R)=\E\big[u'\big(\cR_T^{\xi^*}\big)\big],
\]
 where $\xi^*$ is the optimal strategy associated to  $V(T,X,R)$.
\end{Theo}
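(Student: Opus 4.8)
The plan is to exploit two structural features of the control problem \eqref{omp}. First, for a \emph{fixed} strategy the revenue depends on the third argument merely by a translation: writing $\cR_T^\xi(R_0)$ to display this dependence, one has $\cR_T^\xi(R_0+h)=\cR_T^\xi(R_0)+h$ pointwise. Second, $\cR_T^\xi$ depends jointly concavely on $(\xi,X_0,R_0)$, since the running cost involves $-f$ with $f$ convex while the other terms are linear in $(X^\xi,R_0)$. Together with the existence of an optimal strategy (Theorem~\ref{eos}), these facts will give concavity, $C^1$-regularity in $R$, and the envelope formula at once.

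I would first prove concavity. For fixed $X_0$ the admissible set $\dot{\X}^1(T,X_0)$ is convex and $\xi^\lambda:=\lambda\xi^1+(1-\lambda)\xi^2$ satisfies $X^{\xi^\lambda}=\lambda X^{\xi^1}+(1-\lambda)X^{\xi^2}$; since the stochastic-integral and drift parts of $\cR_T^\xi$ are linear in $X^\xi$, the running-cost term is concave in $\xi$, and $R_0$ enters linearly, one obtains
\[
\cR_T^{\xi^\lambda}\big(\lambda R_0^1+(1-\lambda)R_0^2\big)\;\geq\;\lambda\,\cR_T^{\xi^1}(R_0^1)+(1-\lambda)\,\cR_T^{\xi^2}(R_0^2)\q .
\]
Applying the increasing concave $u$, taking expectations, and passing to the supremum over $\e$-optimal $\xi^1,\xi^2$ yields joint concavity of $V$ in $(X,R)$, in particular concavity in $R$. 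One uses here that $V$ is real-valued: it is bounded above by $1/A_1$ by \eqref{ubd1}, and bounded below because the constant deterministic strategy $\xi_t\equiv X_0/T$ produces a Gaussian revenue, whose expected utility is finite again by \eqref{ubd1}.

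For differentiability I would fix $(T,X,R)$, take the optimizer $\xi^*$ from Theorem~\ref{eos}, write $\cR^*$ for the revenue at initial wealth $R$, and set $\phi(h):=V(T,X,R+h)$, which is finite and concave. Since $\dot{\X}^1(T,X)$ is independent of the wealth variable, $\xi^*$ remains admissible at wealth $R+h$ with revenue $\cR^*+h$, so $\phi(h)\geq\E[u(\cR^*+h)]$ for all $h$, with equality at $h=0$. For $h\downarrow0$ the quotient $h^{-1}\big(u(\cR^*+h)-u(\cR^*)\big)$ is nonnegative and increases to $u'(\cR^*)$, so monotone convergence gives $\phi'_+(0)\geq\E[u'(\cR^*)]$; as $\phi$ is finite and concave, $\phi'_+(0)<\infty$, hence $u'(\cR^*)\in L^1$, and \eqref{apc} then gives the deterministic bound $u'(\cR^*-k)\leq e^{A_2k}u'(\cR^*)\in L^1$ for $k>0$. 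Running the same comparison with $h=-k<0$ and using this domination yields $\phi'_-(0)\leq\E[u'(\cR^*)]$. Since $\phi'_-(0)\geq\phi'_+(0)$ by concavity, all three quantities coincide; thus $\phi$ is differentiable at $0$ and $V_r(T,X,R)=\E[u'(\cR_T^{\xi^*})]$. Finally a finite concave function that is differentiable everywhere has a continuous, non-increasing derivative, so $V$ is continuously partially differentiable in $R$; joint continuity of $V_r$, if wanted, follows from continuity of $V$ together with the local uniform convergence of derivatives of pointwise-convergent concave functions.

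The one genuinely delicate point is the limit in the \emph{left} difference quotient, i.e.\ controlling $\E\big[k^{-1}(u(\cR^*)-u(\cR^*-k))\big]$ as $k\downarrow0$: the saving observation is that \eqref{apc} yields $u'(x-k)\leq e^{A_2k}u'(x)$, which — combined with the a priori bound $\E[u'(\cR^*)]<\infty$ extracted from the right derivative — makes the argument self-contained without any extra exponential-integrability hypothesis on $\cR_T^{\xi^*}$. Everything else is routine bookkeeping about one-sided derivatives of concave functions and the signs of increments.
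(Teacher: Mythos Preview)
The paper does not prove Theorem~\ref{v_r} at all: it is quoted, together with Theorems~\ref{eos} and~\ref{cv}, from the companion papers \citet{LM15} and \citet{LM15n}. There is therefore nothing to compare your argument against in this text; I can only comment on its internal correctness.

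Your argument is sound. The concavity step is routine once one checks that $\dot{\X}^1(T,X_0)$ is convex (the terminal constraint is linear, and the integrability condition is preserved under convex combinations because both $x\mapsto x^\top\Sigma x$ and $f$ are convex). The envelope argument for $V_r$ is the standard ``sandwich'' for one-sided derivatives of a concave function, and you handle the only non-trivial point---integrability of $u'(\cR^*)$ and domination for the left difference quotient---in the right way: the finiteness of $\phi'_+(0)$ forces $\E[u'(\cR^*)]<\infty$, and the Arrow--Pratt bound \eqref{apc} then upgrades this to $u'(\cR^*-k)\leq e^{A_2k}u'(\cR^*)\in L^1$, which is exactly the dominating function needed. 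The final remark on continuity of $V_r$ (a finite concave function differentiable everywhere on $\R$ has continuous derivative, and joint continuity follows from Griffith--Rockafellar-type convergence of gradients of concave functions) is also correct.

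One cosmetic point: you open with ``For fixed $X_0$'' but then assert joint concavity in $(X,R)$; strictly speaking the latter requires taking $\xi^i\in\dot{\X}^1(T,X_0^i)$ with possibly different $X_0^i$, which works for the same reasons but is not what you wrote. Since the theorem only asks for concavity in $R$, this does not affect the result.
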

The following result requires the notion of a comparison principle; see the Definition \ref{adv} below for a precise formulation.
\begin{Theo}\label{cv}
The value function $V$ fulfils a comparison principle and is thus the unique viscosity solution of the following HJB-equation with singularity
 \begin{align}
-V_t  +\frac{X^\top \Sigma X}{2}V_{rr}&+b\cdot X V_r+\sup_{\eta\in \R^d}\Big(\eta ^{\top}\nabla_x V-f(\eta)V_r\Big)T,X,R)=0,
\label{hjb}\\
V(0,X,R)&= \lim_{T\downarrow 0}V(T,X,R)=\begin{cases}
                                             u(R),& \text{if}\; X=0\\
                                     -\infty,& \text{otherwise}.\\
                                    \end{cases}\label{hjbic}
           \end{align}

\end{Theo}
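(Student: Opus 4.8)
The plan is to establish Theorem~\ref{cv} in two logically separate parts: first, that the value function $V$ is a viscosity solution of the HJB equation \eqref{hjb} together with the singular initial condition \eqref{hjbic}; and second, that a comparison principle holds, from which uniqueness follows immediately. For the first part I would proceed by the now-standard dynamic programming route. The key tool is the \emph{dynamic programming principle} (DPP): for any stopping time $\tau\in[0,T]$,
\[
V(T,X_0,R_0)=\sup_{\xi\in\dot{\X}^1(T,X_0)}\E\Big[V\big(T-\tau,X^\xi_\tau,\cR^\xi_\tau\big)\Big],
\]
which I would either cite from \citet{LM15,LM15n} or sketch via the measurable-selection/concatenation argument. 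Granting the DPP, the supersolution property follows by choosing constant controls $\xi_s\equiv\eta$ on a short interval, applying It\^o's formula to a smooth test function $\varphi$ touching $V$ from below, dividing by the time increment and letting it vanish; the subsolution property follows by the usual contradiction argument, assuming a test function touches from above with strict inequality in the HJB operator on a small ball, and using the DPP to derive a contradiction. Here the semilinear structure helps: the only nonlinearity is the Fenchel--Legendre term $\sup_{\eta}(\eta^\top\nabla_x V-f(\eta)V_r)$, and since $V_r\ge 0$ (by concavity in $R$ and Theorem~\ref{v_r}, as $u'>0$), this supremum is well-defined and equals $(V_r)\,f^*(\nabla_x V/V_r)$ off the degeneracy set, with superlinearity of $f$ guaranteeing the sup is attained.

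For the initial condition \eqref{hjbic} I would argue separately at $X=0$ and at $X\neq 0$. At $X=0$: taking the strategy $\xi\equiv 0$ (admissible since $X_T=0$ already) gives $V(T,0,R)\ge\E[u(\cR^0_T)]\to u(R)$ as $T\downarrow 0$ by dominated convergence using the bounds \eqref{ubd1}; the reverse inequality $\limsup_{T\downarrow0}V(T,0,R)\le u(R)$ follows from concavity of $u$ and Jensen's inequality applied to $\E[u(\cR^\xi_T)]$, controlling the drift and stochastic integral terms by the integrability built into $\dot{\X}^1$ and the fact that $f\ge 0$. At $X\neq 0$: the constraint $X_T=0$ forces $\int_0^T\xi_s\,ds=X_0\neq0$, so by Jensen applied to the convex $f$ one gets $\int_0^T f(-\dot\xi_s)\,ds\ge T f(-X_0/T)$, which by superlinearity of $f$ tends to $+\infty$ as $T\downarrow0$; feeding this into the concave-$u$ estimate forces $V(T,X,R)\to-\infty$. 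This is essentially the transformation alluded to in the introduction and I would present it cleanly here.

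The main obstacle is the comparison principle, and this is where the singularity and the unbounded, non-Lipschitz coefficients bite hardest. The standard Crandall--Ishii doubling-of-variables argument must be adapted: (i) the initial datum is $-\infty$ off $\{X=0\}$, so one cannot compare directly at $t=0$ — instead I would compare on $[\delta,T]$ and pass $\delta\downarrow0$, using the precise asymptotic behaviour of $V$ near $t=0$ (the statement in the introduction that $V$ divided by a predetermined function tends to $1$) to control the boundary layer; (ii) the Hamiltonian $\eta\mapsto\eta^\top p-f(\eta)q$ is convex but only has polynomial, not Lipschitz, growth in the state through $f$, so the usual modulus-of-continuity estimate on $|H(x,p)-H(y,p)|$ must be replaced by one exploiting that the $x$-dependence enters only linearly (through $b\cdot X$ and $X^\top\Sigma X$) while $f$ depends only on the control; (iii) the quadratic growth of the $X^\top\Sigma X V_{rr}$ term requires the standard trick of testing against $V(t,X,R)\pm\e e^{Kt}(1+|X|^2+R^2)$ or working with the relaxed semilimits after a logarithmic/exponential change of variables as in \eqref{ieu'}. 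I would structure the comparison proof around the auxiliary bounded problem obtained after the change of variables (mentioned in the abstract and introduction), prove comparison there by the classical method, and transfer back. Once comparison holds, uniqueness of the viscosity solution is automatic, and since $V$ is a solution by the first part, $V$ is \emph{the} unique viscosity solution.
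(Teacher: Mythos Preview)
The paper does not actually prove Theorem~\ref{cv}. It is stated in Section~2 as one of three results imported from \citet{LM15} and \citet{LM15n} (see the sentence immediately preceding Theorem~\ref{eos}: ``We refer to \citet{LM15} and \citet{LM15n} for more precisions and for the following results''). There is therefore no in-paper proof against which to compare your proposal; the theorem is used as a black box, most notably in the proof of Proposition~\ref{p}, where the comparison principle for the auxiliary equation \eqref{hjba} is \emph{deduced from} the comparison principle for \eqref{hjb} via the change of variables in Proposition~\ref{vef}.

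Your sketch is a reasonable outline of how such a result is typically established, and in particular your treatment of the singular initial condition at $X\neq0$ anticipates exactly the Jensen estimate that the paper later exploits in Proposition~\ref{iws}. One point to flag, however: you propose to prove comparison for \eqref{hjb} by first proving it for the auxiliary (transformed) equation and then transferring back. Within the logic of this paper that route would be circular, since Proposition~\ref{p} runs the implication in the opposite direction, invoking Theorem~\ref{cv} as its input. As a freestanding strategy (independent of the paper's internal dependencies) it is viable in principle, but you would then need to supply a direct comparison argument for \eqref{hjba}---with its quadratic gradient term $(W_r)^2$ and the still-unbounded Fenchel--Legendre nonlinearity---rather than appeal to Proposition~\ref{p}. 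That is not obviously easier than attacking \eqref{hjb} head-on, and is presumably why the cited references handle the original equation directly.
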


\section{Auxiliary HJB equation, vanishing singularity and comparison result}

In this section we consider the following HJB equation:
\begin{align}
W_t -b\cdot X\,W_r - \frac{X^\top\Sigma X}2 \big(W_{rr}&+(W_r)^2\big)+\sup_{\xi\in \R^d}(\xi\cdot\nabla_xW+f(-\xi)W_r)=0\label{hjba}\\
W(0,X,R)= \lim_{T\downarrow 0}W(T,X,R)&=\begin{cases}
                                             \log (B-u(R)),& \text{if}\; X=0,\\
                                     \infty,& \text{otherwise},\\
                                    \end{cases}\label{hjbia}
 \end{align}
where $u$ denotes our utility function and $B\geq0$ is such that $B-u>0$ on $\R$ (such a $B$ exists, since the  utility function considered is bounded from above). We wish to show the equivalency between both preceding equations the viscosity sense. To this end, we recall briefly the definitions of viscosity sub- and supersolutions for continuous solutions.
Consider a nonlinear second-order degenerate partial differential equation
\\
\begin{equation}
F(T-t,x,r,v(T-t,x,r),v_t(t,x,r),\nabla_x v(t,x,r),v_r(t,x,r),v_{rr}(t,x,r))=0,\label{ape}
\end{equation}
\\
where $F$ is a continuous function on $]0,T]\times\R^{d}\times\R\times\R\times\R\times\R^d\times\R\times\R$ taking values in $\R$, with a fixed $T>0$ and $(t,x,r)\in\;]0,T]\times\R^{d}\times\R$ with the following assumption:
For all  $(t,x,r,q,p,s,m)\in\;]0,T]\times\R^{d}\times\R\times\R\times\R\times\R^d\times\R$ and $a,b\in\R$, we assume
$$
F(T-t,x,r,q,p,s,m,a)\leq F(T-t,x,r,q,p,s,m,b)\text{ if } a\geq b.
$$
\begin{Def}\label{adv}
Let $v: \;]0,T]\times\R^{d}\times\R\longrightarrow \R$  be a continuous function.
 \begin{enumerate}
\item We say that $v$ is a \emph{viscosity subsolution} of \eqref{ape} if for every $\varphi\in \C^{1,1,2}(]0,T]\times\R^{d}\times\R )$ and every $(t^*,x^*,r^*)\in [0,T[\times\R^{d}\times\R$, when
$v-\varphi$ attains a local maximum at $(T-t^*,x^*,r^*)\in\;]0,T]\times\R^{d}\times\R$, we have
\begin{equation*} 
                   F(.,v,\varphi_t,\nabla_x \varphi,\varphi_r,\varphi_{rr})(T-t^*,x^*,r^*)\leq0.
\label{asub}
\end{equation*}
\item We say that $v$ is a \emph{viscosity supersolution} of \eqref{ape} if for every $\varphi\in \C^{1,1,2}(]0,T]\times\R^{d}\times\R)$ and every $(t^*,x^*,r^*)\in [0,T[\times\R^{d}\times\R$, when
$v-\varphi$ attains a local minimum at $(T-t^*,x^*,r^*)\in\;]0,T]\times\R^{d}\times\R$, we have
\begin{equation*} 
                   F(.,v,\varphi_t,\nabla_x \varphi,\varphi_r,\varphi_{rr})(T-t^*,x^*,r^*)\geq0.
\label{asup}
\end{equation*}
\item We say that $v$ is  a \emph{viscosity solution} of the equation \eqref{ape} if $v$ is a viscosity subsolution and supersolution. 
\item We say that \eqref{ape} has a comparison result,  if for any subsolution $U$ and any supersolution $U$ satisfying the boundary condition
\begin{align*}
\limsup_{t\rightarrow 0} \big(U(t,x,r)-V(t,x,r)\big)&\leq 0, \quad \text{ for fixed } x,r\in\R^d\times\R.
\end{align*}
Then $U\leq V$ on $]0,T]\times\R^d\times\R$.
\end{enumerate}
\end{Def}  

\begin{Prop}\label{vef}
$U$ is a viscosity subsolution (resp.,~$V$ is a viscosity supersolution) of \eqref{hjb} if and only if $\log(B-U)$ is a viscosity supersolution (resp., $\log(B-V)$ is a viscosity subsolution) of \eqref{hjba}.
\end{Prop}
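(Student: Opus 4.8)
The plan is to reduce everything to the single pointwise substitution $W=\log(B-V)$, i.e.\ $V=B-e^{W}$, which already maps the singular data of \eqref{hjb} onto the data of \eqref{hjba} ($u(R)\mapsto\log(B-u(R))$ and $-\infty\mapsto+\infty$), and then to transport that identity to the level of test functions. First I would record the chain-rule identities: if $V=B-e^{W}$ with $W$ of class $\C^{1,1,2}$ on a neighbourhood of a point, then
\begin{equation*}
V_t=-e^{W}W_t,\qquad \nabla_x V=-e^{W}\nabla_x W,\qquad V_r=-e^{W}W_r,\qquad V_{rr}=-e^{W}\bigl(W_{rr}+(W_r)^2\bigr),
\end{equation*}
so that the second-order term in \eqref{hjb} automatically produces the quasilinear combination $W_{rr}+(W_r)^2$ of \eqref{hjba}: this is the Cole--Hopf mechanism, and it explains why the $(W_r)^2$ term is structural rather than incidental. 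Substituting these identities into the left-hand side of \eqref{hjb}, performing the change of variables $\eta=-\xi$ inside the supremum (so that $\sup_{\eta}\bigl(\eta^{\top}\nabla_x V-f(\eta)V_r\bigr)=e^{W}\sup_{\xi}\bigl(\xi\cdot\nabla_x W+f(-\xi)W_r\bigr)$), and factoring out the common $e^{W}>0$, one finds that the left-hand side of \eqref{hjb} equals $e^{W}$ times the left-hand side of \eqref{hjba}. In particular, for $\C^{1,1,2}$ functions the two equations are pointwise equivalent.

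Next I would pass to viscosity (sub/super)solutions via the usual test-function argument, using that $w\mapsto B-e^{w}$ is a $\C^{\infty}$, strictly decreasing bijection of $\R$ onto $]-\infty,B[$ with inverse $v\mapsto\log(B-v)$. Let $U$ be a viscosity subsolution of \eqref{hjb}, let $\varphi\in\C^{1,1,2}$, and suppose $\log(B-U)-\varphi$ has a local minimum at $(T-t^{*},x^{*},r^{*})$, normalized so that the minimum value is $0$. Then $B-U\geq e^{\varphi}$ near that point with equality at it, hence $U\leq B-e^{\varphi}=:\psi$ there with equality at the point, so $U-\psi$ has a local maximum equal to $0$ at $(T-t^{*},x^{*},r^{*})$, and $\psi\in\C^{1,1,2}$. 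Applying the subsolution property of $U$ for \eqref{hjb} with the admissible test function $\psi$, and then reading the pointwise identity of the first step at the contact point (where $\psi$ and $\varphi$ are in the relation $\psi=B-e^{\varphi}$, mirroring $V=B-e^{W}$), converts the resulting inequality into the supersolution inequality of \eqref{hjba} for $\varphi$; the time variable is untouched, so the contact point and the admissible range $t^{*}\in[0,T[$ are preserved verbatim. The converse implication runs the same way with the inverse change of variables: if $\varphi$ touches $U$ from above at a point, then $B-\varphi>0$ there, hence on a neighbourhood by continuity (this is where $U<B$ is used), so $\log(B-\varphi)\in\C^{1,1,2}$ near the point and touches $\log(B-U)$ from below at it, and one invokes the supersolution property of $\log(B-U)$ for \eqref{hjba} and translates back. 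The companion statement with sub- and supersolution interchanged (and \eqref{hjb}, \eqref{hjba} interchanged) is proved identically, with all inequalities reversed.

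The argument carries no deep difficulty; the part that most needs care is the sign bookkeeping. Because the transformation $v\mapsto\log(B-v)$ is \emph{decreasing}, a local maximum of $U-\psi$ (with $\psi=B-e^{\varphi}$) corresponds to a local minimum of $\log(B-U)-\varphi$ at the same point, which is exactly what lets one pass between \emph{sub}- and \emph{super}solutions; but this has to be checked in conjunction with the convention under which each equation is cast into the form \eqref{ape} respecting the monotonicity hypothesis on $F$ (nonincreasing in the $v_{rr}$-slot), since \eqref{hjb} and \eqref{hjba} carry opposite-sign coefficients on their second-order terms and are therefore normalized with opposite orientations. Tracking these two sign choices consistently is what guarantees that the scalar factor relating the two operators at a contact point has the correct sign, and hence that the conclusion is genuinely the supersolution (resp.\ subsolution) inequality. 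A secondary point, already used above, is that $\log(B-\cdot)$ and $w\mapsto B-e^{w}$ must stay within their domains near contact points, which is ensured by the a priori bounds $U<B$, $V<B$ coming from the boundedness from above of the utility and the choice of $B$.
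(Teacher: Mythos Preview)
Your proposal is correct and follows essentially the same route as the paper: both arguments pass from $U$ to $\log(B-U)$ at the level of test functions via the substitution $\psi=B-e^{\varphi}$ (the paper uses $\psi=-e^{\varphi}$, which differs only by the irrelevant additive constant $B$), observe that the strictly decreasing map $v\mapsto\log(B-v)$ exchanges local maxima with local minima, and then read off the pointwise chain-rule identity that turns the operator of \eqref{hjb} into $e^{\varphi}$ times the operator of \eqref{hjba}. Your explicit flagging of the sign bookkeeping (the opposite signs on the second-order terms in the two equations, and the $\eta=-\xi$ change inside the supremum) is in fact more careful than the paper's write-up, but the underlying mechanism is identical.
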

\begin{proof}
 We prove the following equivalence: $U$ is a viscosity subsolution of \eqref{hjb} if and only if $\log (B-U)$ is a viscosity supersolution of \eqref{hjba}. The other equivalence, i.e., $V$ is a viscosity supersolution of \eqref{hjb} if and only if $\log(B-V)$ is a viscosity subsolution of \eqref{hjba}, can be treated similarly.
 
  To this end, take $U$ a viscosity subsolution of \eqref{hjb}, $\varphi\in \C^{1,1,2}(]0,T]\times\R^{d}\times\R )$ and $(T-t^*,x^*,r^*)$ such that $(T-t^*,x^*,r^*)$ is a local minimizer of $\log(B-U)-\varphi$. We can w.l.o.g. suppose that $(\log(B-U)-\varphi)(T-t^*,x^*,r^*)=0$. Hence, we have that $B-U=\exp(\varphi)$ at $(T-t^*,x^*,r^*)$, and therefore
 it follows that $(T-t^*,x^*,r^*)$ is a local maximizer of $U-B+\exp(\varphi)$ (and also of $U+\exp(\varphi))$. We compute now the following derivatives of $\psi:=-\exp(\varphi)$ at $(T-t,x,r)$:
\begin{align*}
\psi_t&=-\varphi_t\psi,&\psi_r&=\varphi_r\psi,\\
\psi_{rr}&=(\varphi_{rr}+(\varphi_r)^2)\psi,&\nabla_x\psi&=\nabla_x\varphi\psi.
\end{align*}
Since $U$ is viscosity subsolution of \eqref{hjb}, we can write: 
\begin{IEEEeqnarray*}{rCl}
\IEEEeqnarraymulticol{3}{l}{\Big(-(\psi)_t +\frac{X^\top\Sigma X}2 \psi_{rr}+b\cdot X\,(\psi)_r +\sup_{\xi\in \R^d}(\xi\cdot\nabla_x\psi-f(\xi)\psi_r)\Big)(T-t^*,x^*,r^*)}\\
&=& \psi\Big(-\varphi_t+b\cdot X\,(\varphi)_r+ \frac{X^\top\Sigma X}2 \varphi_{rr}+ \frac{X^\top\Sigma X}2(\varphi_r)^2\\
&&-\>\sup_{\xi\in \R^d}\big(-\xi\cdot\nabla_x\varphi+f(\xi)\varphi_r)\Big)(T-t^*,x^*,r^*)\\
&\geq& 0.
\end{IEEEeqnarray*}
Hence, we get that
\[
\varphi_t-b\cdot x\,(\varphi)_r- \frac{X^\top\Sigma X}2 \big(\varphi_{rr} +(\varphi_r)^2\big)+\sup_{\xi\in \R^d}\big(\xi\cdot\nabla_x\varphi+f(-\xi)\varphi_r)\geq 0,
\]
at $(T-t^*,x^*,r^*)$, which proves the one direction.

The converse direction can be proved in a very similar way.
\end{proof}
We show now that a comparison principle also holds for \eqref{hjba}.
\begin{Prop}\label{p}
Let $W$ (resp., $\widetilde{W}$) be a continuous viscosity subsolution (resp., continuous viscosity supersolution) of \eqref{hjba}, defined on $]0,T]\times\R^d\times\R$, which satisfies the growth conditions
 \begin{equation}
\log(B-V_2(t,x,r))\geq v(t,x,r)\geq \log(B-V_1(t,x,r)), \text{ for all }(t,x,r)\in\;]0,T]\times\R^d\times\R,\label{gc}
\end{equation}
where $v$ can be chosen to be $W$ or $\widetilde{W}$. Further, we suppose that $W$ and $\widetilde{W}$  satisfy the boundary conditions
\begin{align}
 \limsup_{t\rightarrow 0} W(t,x,r)-\widetilde{W}(t,x,r)&\leq 0, \quad \text{ for fixed } x,r\in\R^d\times\R.\label{u-v0}
\end{align}
Then $W\leq \widetilde{W}$ on $]0,T]\times\R^d\times\R$.
\end{Prop}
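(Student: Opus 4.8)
The plan is to derive the comparison principle for \eqref{hjba} from the one for \eqref{hjb} (Theorem~\ref{cv}) by means of the correspondence of Proposition~\ref{vef}. Given $W$ and $\widetilde W$ as in the statement, I would set
\[
V:=B-\exp(W),\qquad U:=B-\exp(\widetilde W).
\]
Exponentiating the growth condition \eqref{gc} and using that $r\mapsto B-e^{r}$ is decreasing turns it into $V_2\le V\le V_1$ and $V_2\le U\le V_1$; in particular $B-V$ and $B-U$ stay strictly positive, so $U$ and $V$ are well defined and continuous, and they obey exactly the two-sided barrier bound under which the comparison result for \eqref{hjb} applies. Since $\log(B-V)=W$ is a subsolution of \eqref{hjba}, Proposition~\ref{vef} makes $V$ a supersolution of \eqref{hjb}; since $\log(B-U)=\widetilde W$ is a supersolution of \eqref{hjba}, it makes $U$ a subsolution of \eqref{hjb}.

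Because $r\mapsto B-e^{r}$ is strictly decreasing, the asserted inequality $W\le\widetilde W$ on $]0,T]\times\R^d\times\R$ is equivalent to $U\le V$ there, and by Theorem~\ref{cv} the latter holds as soon as
\[
\limsup_{t\to0}\big(U(t,x,r)-V(t,x,r)\big)=\limsup_{t\to0}\big(e^{W(t,x,r)}-e^{\widetilde W(t,x,r)}\big)\le0\qquad\text{for fixed }(x,r).
\]
When $x=0$ this is straightforward: \eqref{gc} keeps both $W$ and $\widetilde W$ bounded near $t=0$, so the factorisation $e^{W}-e^{\widetilde W}=e^{\widetilde W}\big(e^{W-\widetilde W}-1\big)$ together with \eqref{u-v0} yields nonpositive $\limsup$. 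When $x\neq0$, where $W$ and $\widetilde W$ may both blow up as $t\downarrow0$, that factorisation loses too much, and I would instead use the ``cross'' bounds contained in \eqref{gc}, namely $e^{W}\le B-V_2$ and $e^{\widetilde W}\ge B-V_1$, which give $e^{W}-e^{\widetilde W}\le V_1-V_2$ at $(t,x,r)$; one then concludes from the behaviour of the reference barriers $V_1,V_2$ as $t\downarrow0$.

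I expect this last point — transferring the logarithmic boundary inequality \eqref{u-v0} into the linear boundary inequality required by the comparison principle for \eqref{hjb} in the degenerate regime $x\neq0$ — to be the crux of the argument, since it forces one to exploit the precise two-sided control \eqref{gc} (equivalently, the rate at which $V_1$ and $V_2$ diverge at $t=0$) rather than merely \eqref{u-v0}. Once that is settled, the proof closes by invoking the comparison principle for \eqref{hjb} to obtain $U\le V$, hence $W\le\widetilde W$. As an alternative, avoiding the reduction, the statement could also be established directly by the Crandall--Ishii doubling-of-variables method on the unbounded domain: penalising the growth in $(x,r)$ via \eqref{gc}, using \eqref{u-v0} to keep the maximum of $W-\widetilde W$ away from $\{t=0\}$, and handling the quadratic term $(W_r)^2$ and the non-compact supremum over $\xi\in\R^d$ (attained because $f$ is superlinear) by the standard manipulations — but this route is noticeably longer than the reduction above.
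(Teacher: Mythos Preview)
Your reduction to \eqref{hjb} via Proposition~\ref{vef} is exactly what the paper does; its proof occupies three lines, sets $W=\log(B-U)$ and $\widetilde W=\log(B-\widetilde U)$, asserts that the boundary inequality for $U,\widetilde U$ follows from \eqref{u-v0}, and invokes the comparison principle for \eqref{hjb}. No case analysis in $x$ is given.

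The step you flag as the crux is genuine, but your proposed fix does not close it. For $x\neq 0$ both $V_1(t,x,r)$ and $V_2(t,x,r)$ tend to $-\infty$ as $t\downarrow 0$, and they do so at \emph{different} exponential rates (they arise from the utilities $u_1,u_2$ of \eqref{ubd1} with distinct risk-aversion coefficients $A_1<A_2$). Hence the cross bound $e^{W}-e^{\widetilde W}\le V_1-V_2$ has a right-hand side that generically diverges to $+\infty$, not to something $\le 0$; the two-sided control \eqref{gc} is too coarse, by itself, to convert a logarithmic boundary inequality into a linear one in the singular regime.

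The paper sidesteps this entirely, and the fair reading is that the comparison result for \eqref{hjb} imported from \cite{LM15n} is formulated for sub- and supersolutions already sandwiched between $V_2$ and $V_1$, with the singular initial behaviour at $x\neq 0$ absorbed into that growth constraint rather than into a pointwise $\limsup$ condition. Under that reading, once you have $V_2\le U,V\le V_1$ (which you correctly extract from \eqref{gc}), only the $x=0$ boundary check remains --- and that one your factorisation $e^{W}-e^{\widetilde W}=e^{\widetilde W}(e^{W-\widetilde W}-1)$ handles cleanly. Your alternative Crandall--Ishii route would also work but, as you say, is much longer than the reduction.
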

\begin{proof}
We write $\widetilde{W}=\log(B-\widetilde{U})$ and $W=\log(B-U)$. Then, by applying Proposition \ref{vef} we have that $U$ is a supersolution (resp., $\widetilde{U}$ is a subsolution) of \eqref{hjb}, and satisfies
\[
\limsup_{T\downarrow 0} (U-\widetilde{U})(T,X_0,R_0)\geq 0.
\]
Thus, as  \eqref{hjb} fulfils a comparison principle, we have that $U\geq \widetilde{U}$, which implies that $W\leq\widetilde{W}$ on $]0,T]\times\R^d\times\R$.
\end{proof}
The preceding important result permit us to relax the exponential growth condition imposed on the value function. By using an affine transform of the preceding HJB equation, with an adequate function, we will also be able to remove the singularity in the initial condition.  To this end, we  first need to prove the following fundamental proposition.
\begin{Prop}\label{iws} 
Define $\tilde{u}(T,X_0,R_0):=\log (B-u(R_0-Tf(-X_0/T))$, and  let $V$ denote the value function of the maximization problem \eqref{omp} with initial condition \eqref{hjbic}. Then
 $\tilde{u}\in\C^{1,1,2}(]0,T]\times\R^{d}\times\R )$ and verifies
\begin{equation}
\lim_{T\downarrow0}\log(B-V(T,X_0,R_0))-\tilde{u}(T,X_0,R_0)=0,
\end{equation}
locally uniformly in $ (X_0,R_0)$.
\end{Prop}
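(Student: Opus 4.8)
The regularity claim is a chain-rule verification: with $h(T,X_{0}):=Tf(-X_{0}/T)$ one has, using $f\in\C^{1}$, $h\in\C^{1}(]0,\infty[\times\R^{d})$ with $\nabla_{X_{0}}h=-\nabla f(-X_{0}/T)$ and $\partial_{T}h=f(-X_{0}/T)+\nabla f(-X_{0}/T)\cdot(X_{0}/T)$, so that, since $u\in\C^{2}$, $B-u>0$ and $\log$ is $\C^{\infty}$ on $]0,\infty[$, the composition $\tilde u=\log\bigl(B-u(R_{0}-h)\bigr)$ lies in $\C^{1,1,2}(]0,T]\times\R^{d}\times\R)$. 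For the limit I would argue \emph{multiplicatively}, i.e.\ prove $\frac{B-V}{B-u(y(T))}\to1$ with $y(T):=R_{0}-Tf(-X_{0}/T)$, and the two facts I expect to draw from \eqref{apc}, \eqref{ieu'}, \eqref{ubd1} are: (a)~$C^{*}:=\sup_{x\in\R}\frac{u'(x)}{B-u(x)}<\infty$, finite because $-u''/u'\le A_{2}$ forces $u'(s)\ge u'(x)e^{-A_{2}(s-x)}$ for $s\ge x$, so $B-u(x)\gtrsim u'(x)$ as $x\to-\infty$, while $u'$ is bounded and $B-u$ bounded below as $x\to+\infty$ once $B>\sup u$ (which we may assume); and (b)~the induced uniform ``relative Lipschitz'' bound $\bigl|\frac{B-u(x+z)}{B-u(x)}-1\bigr|\le\frac{C^{*}}{A_{2}}\,|z|\,e^{A_{2}|z|}$ for all $x,z\in\R$, obtained from $u'(x+s)\le u'(x)e^{A_{2}|s|}$ (a consequence of $-u''/u'\le A_{2}$) by integration and division by $B-u(x)$. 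The point of~(b) is that it is uniform in the base point $x$, so it will survive even though $y(T)\to-\infty$ (possibly like $T^{1-p}$) for $X_{0}\neq0$.

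For the upper bound on $B-V$, the constant-fuel strategy $\bar\xi_{t}\equiv X_{0}/T$ is admissible in \eqref{omp}, gives $X^{\bar\xi}_{t}=X_{0}(1-t/T)$, hence $\cR_{T}^{\bar\xi}=y(T)+Z_{T}$ where $Z_{T}=\int_{0}^{T}(1-t/T)X_{0}^{\top}\sigma\,dB_{t}+\tfrac{T}{2}\,b\cdot X_{0}$ is an $\mathcal{O}(T)$ drift term plus a Gaussian martingale increment of variance $\mathcal{O}(T)$. Since $V\ge\E[u(\cR_{T}^{\bar\xi})]$, we have $B-V\le B-\E[u(\cR_{T}^{\bar\xi})]$, and applying~(b) pathwise then taking expectations,
\[
\Bigl|\tfrac{B-\E[u(\cR_{T}^{\bar\xi})]}{B-u(y(T))}-1\Bigr|\le\tfrac{C^{*}}{A_{2}}\,\E\bigl[|Z_{T}|\,e^{A_{2}|Z_{T}|}\bigr]=\mathcal{O}(\sqrt T),
\]
locally uniformly in $(X_{0},R_{0})$ (the last estimate being the standard Gaussian one); hence $\limsup_{T\downarrow0}\frac{B-V}{B-u(y(T))}\le1$.

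For the lower bound on $B-V$ I would invoke the optimal control $\xi^{*}$ of Theorem~\ref{eos}: the terminal condition $X_{T}^{\xi^{*}}=0$ gives $\int_{0}^{T}\xi^{*}_{t}\,dt=X_{0}$, so Jensen applied to the convex fuel cost gives $\int_{0}^{T}f(-\xi^{*}_{t})\,dt\ge Tf(-X_{0}/T)$ pathwise; together with concavity of $u$, Jensen once more and the martingale property, $V=\E[u(\cR_{T}^{\xi^{*}})]\le u\bigl(y(T)+w_{T}\bigr)$ with $w_{T}:=\E[\int_{0}^{T}b\cdot X_{t}^{\xi^{*}}\,dt]$, whence by~(a)--(b), $\frac{B-V}{B-u(y(T))}\ge1-\frac{C^{*}}{A_{2}}\,|w_{T}|\,e^{A_{2}|w_{T}|}$. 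Since $X_{t}^{\xi^{*}}=\int_{t}^{T}\xi^{*}_{s}\,ds$ we have $|w_{T}|\le|b|\,T\,\E[\int_{0}^{T}|\xi^{*}_{t}|\,dt]$, so the whole matter reduces to the $\L^{1}$-bound $T\,\E[\int_{0}^{T}|\xi^{*}_{t}|\,dt]\to0$ for short horizons. \textbf{This is the step I expect to be the main obstacle.} The naive route --- superlinearity of $f$ to pass from $|\xi^{*}|$ to $f(-\xi^{*})$, then $\E[\int f(-\xi^{*})]\le\frac{u(R_{0})-V}{u'(R_{0})}+w_{T}$ together with the crude lower bound $V\ge\E[u_{2}(\cR_{T}^{\bar\xi})]=-e^{-A_{2}y(T)}(1+o(1))$ --- only controls $\E[\int f(-\xi^{*})]$ by a quantity of order $e^{cT^{1-p}}$, which is useless when $p>1$. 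A correct argument must instead show that the \emph{excess fuel} $\E[\int_{0}^{T}f(-\xi^{*}_{t})\,dt]-Tf(-X_{0}/T)$ is small, so that strict convexity of $f$ pins $\xi^{*}$ near $\bar\xi$; this is a quantitative short-horizon near-optimality statement for $\xi^{*}$, and is where the sharper estimates of \citet{LM15,LM15n} should enter.

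An alternative, PDE-flavoured route exploits that $\tilde u$ almost solves \eqref{hjba}. Writing $\cG$ for the operator on the left of \eqref{hjba} and $\rho:=R-Tf(-X/T)$, a direct computation --- using $\tilde u_{rr}+(\tilde u_{r})^{2}=-u''(\rho)/(B-u(\rho))$, $\nabla_{x}\tilde u=\tilde u_{r}\,\nabla f(-X/T)$, and the Fenchel identity $f(\eta)-\eta\cdot\nabla f(\eta)=-f^{*}(\nabla f(\eta))$ --- shows that the Hamilton--Jacobi part of $\cG[\tilde u]$ cancels \emph{exactly} (this is just the fact that $Tf(-X/T)$ solves the underlying eikonal equation), leaving
\[
\cG[\tilde u]=\frac{u'(\rho)}{B-u(\rho)}\,(b\cdot X)-\frac{X^{\top}\Sigma X}{2}\cdot\frac{|u''(\rho)|}{B-u(\rho)},
\]
which by~(a) and $-u''/u'\le A_{2}$ is bounded \emph{uniformly in $T$ and $R$} by a locally bounded function of $X$ alone that vanishes at $X=0$. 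One would then perturb $\tilde u$ by a function tending to $0$ as $T\downarrow0$ to obtain local super-/subsolutions of \eqref{hjba} and conclude from the comparison principle of Proposition~\ref{p} (recall $\log(B-V)$ is, by Proposition~\ref{vef} and Theorem~\ref{cv}, the viscosity solution of \eqref{hjba}); the costs here are the spatial localisation forced by the quadratic growth of $\cG[\tilde u]$ in $X$, and the verification of the boundary condition \eqref{u-v0} for the perturbed functions, which is itself a weak, pointwise form of the conclusion (for $X=0$ it is immediate from $u(R)\le V(T,0,R)\le u(R+w_{T})$ and $f(0)=0$).
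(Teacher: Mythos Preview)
Your overall architecture matches the paper's: bound $V$ from below via the linear strategy $\bar\xi\equiv X_0/T$, and from above via Jensen (for convex $f$ and concave $u$) applied to the optimal $\xi^*$. The paper works with the ratio $V/u(y(T))$ rather than your $(B-V)/(B-u(y(T)))$, but since both denominators diverge to $-\infty$ (resp.\ $+\infty$) for $X_0\neq0$ and $B$ is a fixed constant, the two formulations are equivalent.

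Where you genuinely differ is in the tool for the linear-strategy step. Your route is the uniform ``relative Lipschitz'' estimate~(b), which is correct (indeed $C^*\le A_2$ falls straight out of your integration argument for~(a)) and reduces the matter to the Gaussian moment $\E[|Z_T|e^{A_2|Z_T|}]=O(\sqrt T)$. The paper instead uses an It\^o-formula trick: it inserts an artificial $\pm\tfrac{A_2}{2}\langle\cR^\zeta\rangle_T$ correction, expands, and uses $-u''/u'\le A_2$ to see that the stochastic remainder has nonnegative expectation, obtaining the \emph{deterministic} lower bound
\[
\E\bigl[u(\cR^\zeta_T)\bigr]\ \ge\ u\Bigl(y(T)+\tfrac{T}{2}\,b\cdot X_0-2A_2|\Sigma|\,|X_0|^2\,T\Bigr),
\]
i.e.\ $u(y(T)+O(T))$ with constants depending only on $X_0$. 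This is slicker---no Gaussian tails needed---but your (a)--(b) machinery is more systematic and makes the local uniformity in $(X_0,R_0)$ explicit.

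On the $\xi^*$ side, you are right that everything reduces to controlling $w_T=\E\bigl[\int_0^T b\cdot X_t^{\xi^*}\,dt\bigr]$, and you are also right to flag it as the crux. The paper's proof at this point simply asserts ``as $\xi^*\in\dot\X^1(T,X_0)$, we can find an $M>0$ such that $\E[\int_0^T X_t^{\xi^*}\cdot b\,dt]\le|b|MT$'' and moves on; no argument is given for why $M$ can be taken locally uniform as $T\downarrow0$, and mere membership in $\dot\X^1$ does not yield this. So the paper glosses over precisely the step you isolated, presumably deferring to the regularity results of the companion papers. Your naive route via superlinearity of $f$ is, as you say, too weak when $p>1$; the paper offers no replacement within this proof. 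Your PDE-flavoured alternative---computing the defect of $\tilde u$ in \eqref{hjba} and observing that the Hamilton--Jacobi part cancels exactly, leaving a residual bounded uniformly in $(T,R)$ by a function of $X$---is not in the paper but is a natural and arguably cleaner way to see why $\tilde u$ is the correct ansatz.
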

\begin{proof}
It is sufficient to prove that, for $X_0\neq0$, it holds
\[
\lim_{T\downarrow0}\frac{V(T,X_0,R_0)}{u(R_0-Tf(-X_0/T))}= 1. 
\]
 Toward this end, consider first the linear strategy
 $\zeta:=X_0/T\in\dot{\X}^1(T,X_0)$. We want to show that
\begin{equation}
\lim_{T\downarrow0}\frac{\E\big[u(\cR^\zeta_T)\big]}{u(R_0-Tf(-X_0/T))}=1,\label{lvu}
\end{equation}
where
\[
\cR^\zeta_T=R_0+X_0\int^T_0(1-t/T)\sigma^\top\,dB_t+\frac{T}2b\cdot X_0-Tf(-X_0/T).
\]
But we have
\begin{IEEEeqnarray*}{rCl}
 \E\big[u(\cR^\zeta_T)\big] &=& \E\big[u(\cR^\zeta_T+A_2/2 \langle \cR_.^{\zeta}\rangle_T-A_2/2 \langle \cR_.^{\zeta}\rangle_T)\big)\big]\\
 &=&\E\bigg[u \bigg(R_0+\frac{T}2b\cdot X_0-Tf(X_0/T)-\frac{A_2}2\int_0^T(X^{\zeta}_t)^\top \Sigma X^{\zeta}_t \,\d\bigg)\bigg]\\
&&+\>\E\bigg[\int^T_0 u'(\cR_t^{\zeta})\,X^{\zeta}_t \sigma\,dB_t+\frac{A_2}2\int^T_0 u'(\cR_t^{\zeta})(X^{\zeta}_t)^\top \Sigma
 X^{\zeta}_t \,\d\\
 &&+\>\frac1{2}\int^T_0 u''(\cR_t^{\zeta})\,d
 \langle \cR_{ \cdot}^{\zeta}\rangle_t \bigg]\\
    &\geq&u \bigg(R_0+\frac{T}2b\cdot X_0-Tf(X_0/T)-2A_2|X_0|^2T|\Sigma|\bigg)\\    
 &&+\>\E\bigg[\frac{A_2}2\int^T_0 u'(\cR_{t}^{\zeta})(X^{\zeta}_t)^\top \Sigma
  X^{\zeta}_t \,\d\\
  &&-\>\frac1{2}\int^T_0 u'(\cR_{t}^{\zeta})\frac{-u''(\cR_t^{\zeta})}{u'(\cR_t^{\zeta})}\,
  (X^{\zeta}_t)^\top \Sigma
  X^{\zeta}_t \,\d\bigg]\\
   &\geq&u \bigg(R_0+\frac{T}2b\cdot X_0-Tf(X_0/T)-2A_2|X_0|^2T|\Sigma|\bigg)\\
   &&+\>\E\bigg[\frac{A_2}2\int^T_0 u'(\cR_{t}^{\zeta})(X^{\zeta}_t)^\top \Sigma
 X^{\zeta}_t \,\d-\frac{A_2}2\int^T_0 u'(\cR_{t}^{\zeta})(X^{\zeta}_t)^\top \Sigma
 X^{\zeta}_t \,\d\bigg]\\
&=&u \bigg(R_0+\frac{T}2b\cdot X_0-Tf(X_0/T)-2A_2|X_0|^2T|\Sigma|\bigg).
  \end{IEEEeqnarray*}
 And this implies that
 \begin{equation}
\liminf_{T\downarrow0}\frac{u(R_0-Tf(-X_0/T))}{\E\big[u(\cR^\zeta_T)\big]}\geq1.
\end{equation}
 Let now $\xi^*$ be the optimal strategy associated to $V(T,X_0,R_0)$. Observe that applying Jensen's inequality  to the convex function $f$ and the concave function $u$ yields the inequality
 \[
\E\big[u\big(\cR^{\xi^*}_T\big)\big]\leq u\bigg(\E\bigg[R_0+\int^T_0 X^{\xi^*}_t\cdot b\,dt-Tf(-X_0/T)\bigg]\bigg).
\]
As $\xi^*\in\dot{\X}^1(T,X_0)$, we can find an $M>0$ such that $\E[\int^T_0 X^{\xi^*}_t\cdot b\,dt]\leq |b|MT$. And therefore we have
\begin{equation}
\E\big[u\big(\cR^{\xi^*}_T\big)\big]\leq u\Big(R_0+|b|MT-Tf(-X_0/T)\Big).\label{vfi}
\end{equation}
Using the fact that, for $T$ close enough to $0$, both $V(T,X_0,R_0)$ and $u(R_0-Tf(-X_0/T))$ are negative, we finally get 
\begin{equation}
\liminf_{T\downarrow0}\frac{V(T,X_0,R_0)}{u(R_0-Tf(-X_0/T))}\geq \liminf_{T\downarrow0}\frac{u(R_0+|b|MT-Tf(-X_0/T))}{u(R_0-Tf(-X_0/T))}= 1.\label{u}
\end{equation}
Since $\xi^*$ is optimal (and hence $V(T,X_0,R_0)\geq\E\big[u(\cR^\zeta_T)\big]$), we also have 
\begin{align*}
1\geq&\limsup_{T\downarrow0}\frac{V(T,X_0,R_0)}{\E\big[u(\cR^\zeta_T)\big]}\\
=&\limsup_{T\downarrow0}\frac{V(T,X_0,R_0)u(R_0-Tf(-X_0/T))}{\E\big[u(\cR^\zeta_T)\big]u(R_0-Tf(-X_0/T))}\\
=&\limsup_{T\downarrow0}\frac{V(T,X_0,R_0)}{u(R_0-Tf(-X_0/T))}\cdot\liminf_{T\downarrow0}\frac{u(R_0-Tf(-X_0/T))}{\E\big[u(\cR^\zeta_T)\big]}\\
=&\limsup_{T\downarrow0}\frac{V(T,X_0,R_0)}{u(R_0-Tf(-X_0/T))}.
\end{align*}
Combining the preceding inequality with \eqref{u} concludes the proof
\end{proof}
  \begin{rem}\label{bcs}
 The preceding proof remains unchanged if we send $|R_0|$  to infinity (instead of sending $T$ to $0$),  other parameters being fixed. Thus, we have that
 \begin{equation*}
\lim_{R_0\rightarrow\pm\infty}\log(B-V(T,X_0,R_0))-\log(B-u(R_0-Tf(X_0/T))=0.                              
\end{equation*}
This will later enable us  to set  $\log(B-u(R_0-Tf(X_0/T))$ as a boundary condition, when taking $|R_0|$ large enough in our scheme (since we will work with a finite grid in the numerical examples);
however,  in general,
\begin{equation*}
\lim_{|X_0|\rightarrow \infty}\log(B-V(T,X_0,R_0))-\log(B-u(R_0-Tf(X_0/T))\neq0,                              
\end{equation*}
for $T\neq0$.
\xqed\diamondsuit
 \end{rem}

For $\tilde{u}$ as in the preceding proposition, we consider now the following auxiliary equation with zero as initial condition for \eqref{hjba}:
\begin{align}
(W+\tilde{u})_t &-b\cdot X\,(W+\tilde{u})_r- \frac{X^\top\Sigma X}2\big( (W+\tilde{u})_{rr}+((W+\tilde{u})_r)^2\big)\notag\\
&+\sup_{\xi\in \R^d}\big(\xi\cdot\nabla_x(W+\tilde{u})+f(-\xi)(W+\tilde{u})_r)=0,&\label{hab}\\
& \lim_{T\downarrow 0}W(T,X,R)\label{habi}=0.& \end{align}
\begin{rem}
 Note that we can rewrite \eqref{hab} in the following way:
\begin{align*}
0&=(W+\tilde{u})_t -b\cdot X\,(W+\tilde{u})_r - \frac{X^\top\Sigma X}2 \big((W+\tilde{u})_{rr}+ ((W+\tilde{u})_r)^2\big)\\
&-(W+\tilde{u})_r f^*\bigg(-\frac{\nabla_x(W+\tilde{u})}{(W+\tilde{u})_r}\bigg),
\end{align*}
where $f^*$ denotes the Fenchel-Legendre transformation of $f$.
\xqed{\diamondsuit}
\end{rem}
 The next proposition states that the notion of viscosity solutions of \eqref{hjba} and viscosity solutions of  \eqref{hab} is equivalent, and moreover,  a comparison result holds. 
\begin{Prop}\label{cpb}
$W$ is a viscosity subsolution (resp., supersolution) of \eqref{hjba} with initial condition \eqref{hjbia} if and only if $W-\tilde{u}$ is a viscosity subsolution (resp., supersolution) of \eqref{hab} with initial condition \eqref{habi}. Moreover, a comparison principle holds for \eqref{hab}.
\end{Prop}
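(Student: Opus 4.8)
The plan is to observe that \eqref{hab} is nothing but \eqref{hjba} rewritten for the shifted unknown $W+\tilde u$, and to exploit that Proposition \ref{iws} supplies $\tilde u\in\C^{1,1,2}(]0,T]\times\R^{d}\times\R)$, so that adding $\tilde u$ to a $\C^{1,1,2}$ test function yields again a $\C^{1,1,2}$ test function. With this in hand the first equivalence is the same ``change of test function'' manipulation already used in the proof of Proposition \ref{vef}, and the comparison principle for \eqref{hab} reduces to the comparison principle for \eqref{hjba} (Proposition \ref{p}) by pushing sub- and supersolutions through that correspondence.

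For the equivalence of the PDEs I would argue as follows. Set $\tilde W:=W-\tilde u$, so $\tilde W+\tilde u=W$. If $\varphi\in\C^{1,1,2}$ and $\tilde W-\varphi$ attains a local maximum at $(T-t^{*},x^{*},r^{*})$, then $W-(\varphi+\tilde u)$ attains a local maximum at the same point and $\varphi+\tilde u\in\C^{1,1,2}$; plugging the test function $\varphi+\tilde u$ into the subsolution inequality for $W$ relative to \eqref{hjba} reproduces, term by term, the subsolution inequality for $\tilde W$ relative to \eqref{hab} with test function $\varphi$, since the operator in \eqref{hab} is by construction the operator of \eqref{hjba} evaluated at $\varphi+\tilde u$. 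The converse is symmetric, testing a subsolution of \eqref{hab} with $\varphi-\tilde u\in\C^{1,1,2}$, and the supersolution case merely exchanges local maxima for local minima. Hence $W\mapsto W-\tilde u$ is a bijection between the viscosity sub- (resp. super-) solutions of \eqref{hjba} and those of \eqref{hab}.

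Next I would match the initial conditions. Since $f(0)=0$, one has $\tilde u(T,0,R)=\log(B-u(R))$ for every $T>0$; and for $X\neq0$, superlinearity of $f$ gives $Tf(-X/T)=|X|\,f(-X/T)/|X/T|\to\infty$ as $T\downarrow0$, whence $R-Tf(-X/T)\to-\infty$ and, by \eqref{ubd1}, $u(R-Tf(-X/T))\to-\infty$, i.e. $\tilde u(T,X,R)\to\infty$. So $\tilde u$ has precisely the boundary behaviour that \eqref{hjbia} imposes on $W$, and therefore $W$ satisfies \eqref{hjbia} exactly when $W-\tilde u$ satisfies \eqref{habi}; the same holds if the boundary data are read in the one-sided $\limsup/\liminf$ sense of Definition \ref{adv}(4).

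Finally, for comparison: given a continuous subsolution $W_1$ and a continuous supersolution $W_2$ of \eqref{hab} satisfying \eqref{habi} and the relevant growth bound, the first part makes $W_1+\tilde u$ and $W_2+\tilde u$ a sub- and a supersolution of \eqref{hjba}; their boundary inequality becomes $\limsup_{t\to0}\big((W_1+\tilde u)-(W_2+\tilde u)\big)=\limsup_{t\to0}(W_1-W_2)\le0$, matching \eqref{u-v0}, and the growth hypothesis attached to \eqref{hab} is meant to be stated exactly so that adding $\tilde u$ returns \eqref{gc} (readjusting $V_1,V_2$ via \eqref{ubd1} and Proposition \ref{iws}). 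Proposition \ref{p} then yields $W_1+\tilde u\le W_2+\tilde u$, hence $W_1\le W_2$ on $]0,T]\times\R^{d}\times\R$. I expect the one delicate point to be precisely this last verification — that the blow-up of $\tilde u$ as $T\downarrow0$ is compatible with the admissible growth class for \eqref{hab}, so that the hypotheses of Proposition \ref{p} are genuinely met after translation; everything else is bookkeeping already prepared by Proposition \ref{iws}.
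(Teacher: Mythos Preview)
Your approach is correct and matches the paper's one-line proof exactly: the test-function bijection $\varphi\leftrightarrow\varphi\pm\tilde u$ (using $\tilde u\in\C^{1,1,2}$ from Proposition~\ref{iws}) gives the sub/supersolution equivalence between \eqref{hjba} and \eqref{hab}, and the comparison principle follows by translating back to \eqref{hjba} and invoking Proposition~\ref{p}. Your extra care on initial conditions and growth goes beyond what the paper spells out; note only that the claimed equivalence of \eqref{hjbia} and \eqref{habi} at $X\neq0$ is really just the implication \eqref{habi}$\Rightarrow$\eqref{hjbia} (since $\infty-\infty$ is indeterminate), but this is harmless because Proposition~\ref{iws} supplies \eqref{habi} directly for the actual solution and the comparison principle only uses the boundary inequality $\limsup_{t\to0}(W_1-W_2)\le0$, which is unchanged by adding $\tilde u$.
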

\begin{proof}
This is a straightforward application of Proposition \ref{p} and the definition of viscosity solutions: we have that $\varphi$ is a test function for $W$, when applied to \eqref{hjba}, if and only if $\varphi-\tilde{u}$ is a test function for $W-\tilde{u}$, when applied to \eqref{hab}.
\end{proof}
\section{Numerical schemes and convergence results}
In this section, our goal is to prove a convergence result, similar to the one derived in \citet{BS91}. However, we will have to relax their conditions in order to ensure that finite difference schemes applied to our numerical examples will converge, locally uniformly, to the unique viscosity solution of \eqref{hab}. Let us now introduce the definition of a numerical scheme, in our setting.
\subsection{Barles-Souganidis convergence result}
\begin{Def}\label{sd}
A numerical scheme for \eqref{hab} with initial condition \eqref{habi} is an equation of the following form:
\begin{align}
S(h,t,x,r,w_h(t,x,r),[w_h]_{t,x,r})&=0, \quad \text{ for } (t,x,r)\in \G_h\backslash \{t=0\},\label{ds}\\
w_h(0,x,r)&=0,\quad \text{ in } \G_h\cap\{t=0\},\label{dsic}
\end{align}
where $S$ is locally bounded, $h:=\max(|\Delta t|, |\Delta x|, |\Delta r|)$  denotes the size of the mesh, and $$\G_h:=\Delta t\cdot\{0,1,\dots,n_T\}\times\Delta x\cdot \Z^d\times \Delta r\cdot\Z.$$ The quantity $w_h$ represents the approximation of $w$, and  $[w_h]_{t,x,r}$ stands for the value of $w_h$ close to $(t,x,r)$.
\end{Def}
 In order to have an analogous result to the Barles-Souganidis convergence theorem that can be applied to our numerical schemes, we need to slightly modify the three conditions required  in \citet{BS91}.
\begin{Def}
A numerical scheme $S$ is said to be
\begin{itemize}
\item \emph{locally $\delta$-monotone} if there exists $\delta>0$ such that whenever $|w-v|\leq\delta$: if $w\geq v$ on an open bounded set $O\subset\;]0,T]\times \R^d\times\R$, then
 $$S(h,t,x,r,z,w)\leq S(h,t,x,r,z,v),$$
for all $h>0,\; (t,x,r)\in O$ and $z\in\;]-S_O,S_O[$, where $S_O:=\sup_{y\in O}|w(y)|+1$. Here, $w\geq v$ is to be understood componentwise.
\item \emph{consistent} if, for every $\varphi\in \C^{1,1,2}(]0,T]\times\R^{d}\times\R )$ and every $(t,x,r)\in [0,T[\times\R^{d}\times\R$, we have
\begin{eqnarray*}
\lefteqn{S(h,t,x,r,\varphi(t,x,r), [\varphi+m ]_{t,x,r})}\\
&&\underset{\substack{
  m\to0\\
   h\to0
  }}{\longrightarrow}\big((\varphi+\widetilde{u})_t -\frac{x^\top\Sigma x}2(\varphi+\tilde{u})_r^2-\inf_{\xi\in \R^d}\widetilde{\cL}^{\xi}(\varphi+\tilde{u})\big)(t,x,r),
\end{eqnarray*}
locally uniformly in $(t,x,r)$, with
\begin{eqnarray*}
\widetilde{\cL}^{\xi}(\varphi+\tilde{u})\big)(t,x,r)&=&\frac{x^\top\Sigma x}2 (\varphi+\tilde{u})_{rr}+b\cdot x\,(\varphi+\tilde{u})_r \\
&& -\big(\xi\cdot\nabla_x(\varphi+\tilde{u})+f(-\xi)(\varphi+\tilde{u})_r\big)(t,x,r).
\end{eqnarray*}
\item \emph{(locally) stable} if there exists $\delta>0$ such that, for every $\delta>h>0$ and every open bounded set $O\subset\;]0,T[\times\R^{d}\times\R$, there is a locally bounded solution $w_h$ of \eqref{ds} satisfying
\[
\sup_{h>0} |w_h|\leq C_O \text{ on }\; O,
\]
where $C_O$ is a constant depending only on $O$.
\end{itemize}
\end{Def}
\begin{rem}
\begin{enumerate}
\item In the preceding definition, the monotonicity property as defined in \citet{BS91} (i.e., monotonicity of the scheme without requiring an additional control of $|w-v|$)  can be replaced by our $\delta$-monotonicity, as mentioned by \citet{Ta11}.
\item The local stability is equivalent to the one used by Barles and Souganidis, due to the local property of the viscosity solution.
\item Since the viscosity solution of \eqref{hab} is continuous and has a partial derivative in its third variable (Theorem \ref{v_r}), the approximation $w_h$ can be chosen among the same class of functions. Moreover, as this partial derivative has locally a strictly negative upper bound, we can suppose that the analogous boundedness property also holds  for $w_h$.
\item
As for the comparison principle, the monotonicity property is crucial, and without this assumption the scheme may fail to converge to the unique viscosity solution, as it can be seen in, e.g., \citet{PF03} or \citet{O06nn}. This property is in practice the most difficult one to prove, due to the nonlinearity of our HJB equation, as we will see in the next section.
\end{enumerate}
\xqed{\diamondsuit}
\end{rem}
We can now state and show the fundamental theorem of this chapter. 
\begin{Theo}\label{bsc}
Suppose that the numerical scheme $S$ is $\delta$-monotone, consistent, and locally stable. Then, the solution $w_h$ of \eqref{ds} converges, locally uniformly on the set $]0,T]\times\R^d\times\R$, to the unique continuous viscosity solution of \eqref{hab}.
\end{Theo}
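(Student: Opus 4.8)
The plan is to adapt the Barles--Souganidis half-relaxed limits argument to the present non-bounded setting, using the comparison principle for \eqref{hab} established in Proposition~\ref{cpb} as the key ingredient that replaces their comparison hypothesis. First I would define, for $(t,x,r)\in\;]0,T]\times\R^d\times\R$, the upper and lower half-relaxed limits
\[
\overline{w}(t,x,r):=\limsup_{\substack{h\to0\\ (t',x',r')\to(t,x,r)}} w_h(t',x',r'),\qquad
\underline{w}(t,x,r):=\liminf_{\substack{h\to0\\ (t',x',r')\to(t,x,r)}} w_h(t',x',r').
\]
Local stability guarantees that both are finite and locally bounded, with $\underline{w}\leq\overline{w}$; moreover the locally uniform bounds inherited from \eqref{gc} (translated through the substitution $W\mapsto W-\tilde u$) show that $\overline{w}$ and $\underline{w}$ satisfy the same two-sided growth bounds as any solution of \eqref{hab}, so they are admissible for the comparison principle. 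The heart of the argument is then the standard lemma: $\overline{w}$ is a viscosity subsolution of \eqref{hab} and $\underline{w}$ is a viscosity supersolution. Once this is in hand, I would check the initial condition, namely $\limsup_{t\to0}(\overline{w}(t,x,r)-\underline{w}(t,x,r))\leq 0$ (in fact both tend to $0$), which follows from \eqref{dsic} together with the locally uniform convergence $\log(B-V)-\tilde u\to0$ of Proposition~\ref{iws}. Applying the comparison principle of Proposition~\ref{cpb} yields $\overline{w}\leq\underline{w}$ on $]0,T]\times\R^d\times\R$; combined with the reverse inequality this forces $\overline{w}=\underline{w}=:w$, which is therefore the unique continuous viscosity solution, and the equality of the half-relaxed limits is exactly local uniform convergence $w_h\to w$.

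The subsolution property for $\overline{w}$ is proved by the usual test-function perturbation. Fix $\varphi\in\C^{1,1,2}$ and a point $(\bar t,\bar x,\bar r)\in[0,T[\times\R^d\times\R$ at which $\overline{w}-\varphi$ has a strict local maximum (strictness arranged by adding a quartic penalty, as usual). By definition of $\limsup$ there exist $h_n\to0$ and points $(t_n,x_n,r_n)\to(\bar t,\bar x,\bar r)$ on the grid $\G_{h_n}$ at which $w_{h_n}-\varphi$ attains a local maximum, with $w_{h_n}(t_n,x_n,r_n)\to\overline{w}(\bar t,\bar x,\bar r)$; write $\xi_n:=w_{h_n}(t_n,x_n,r_n)-\varphi(t_n,x_n,r_n)\to 0$. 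The local maximum property means $w_{h_n}\leq \varphi+\xi_n$ near $(t_n,x_n,r_n)$ with equality at that point; since $|\xi_n|\leq\delta$ eventually, the local $\delta$-monotonicity of $S$ gives
\[
0=S\big(h_n,t_n,x_n,r_n,w_{h_n}(t_n,x_n,r_n),[w_{h_n}]_{t_n,x_n,r_n}\big)
\geq S\big(h_n,t_n,x_n,r_n,\varphi(t_n,x_n,r_n)+\xi_n,[\varphi+\xi_n]_{t_n,x_n,r_n}\big),
\]
and letting $n\to\infty$ and invoking consistency yields
\[
\big((\varphi+\tilde u)_t-\tfrac{\bar x^\top\Sigma\bar x}{2}(\varphi+\tilde u)_r^2-\inf_{\xi\in\R^d}\widetilde{\cL}^\xi(\varphi+\tilde u)\big)(\bar t,\bar x,\bar r)\leq 0,
\]
which is precisely the viscosity subsolution inequality for \eqref{hab}. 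The supersolution statement for $\underline{w}$ is symmetric, using a strict local minimum and the other direction of $\delta$-monotonicity.

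The main obstacle, and the reason the statement is not a verbatim copy of \citet{BS91}, is ensuring that $\overline w$ and $\underline w$ are genuinely admissible competitors for the comparison principle of Proposition~\ref{cpb}: one must verify the two-sided growth bound \eqref{gc} and the boundary behavior \eqref{u-v0}/\eqref{habi} for the half-relaxed limits, not merely for the scheme solutions $w_h$ at fixed $h$. The growth bound is where the delicate point $Tf(X_0/T)$ from $\tilde u$ enters: it is handled by passing the locally uniform inequalities $\log(B-V_2)-\tilde u\geq w_h\geq\log(B-V_1)-\tilde u$ (which the stable family $w_h$ inherits, per the third remark after the definition of the scheme) to the limit; local uniformity is what keeps these bounds intact under $\limsup$/$\liminf$. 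For the boundary condition, the subtlety is that \eqref{habi} is a limit as $t\downarrow 0$ rather than a pointwise value on $\{t=0\}$, so one argues that for $(t,x,r)$ with $t$ small, $w_h(t,x,r)$ is uniformly close to $0$ using \eqref{dsic}, the consistency/stability near $t=0$, and Proposition~\ref{iws}; this gives $\overline w(0^+,\cdot,\cdot)\leq 0\leq\underline w(0^+,\cdot,\cdot)$ and hence \eqref{u-v0}. The remaining steps --- strictification of extrema by quartic perturbation, extraction of maximizing grid sequences, and the localization that confines everything to bounded open sets $O$ where $\delta$-monotonicity and stability are available --- are routine once this admissibility is secured.
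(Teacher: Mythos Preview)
Your proposal is correct and follows essentially the same Barles--Souganidis half-relaxed-limits argument as the paper's own proof: define $\overline w,\underline w$, establish the sub-/supersolution properties via $\delta$-monotonicity and consistency along a sequence of approximate extremizers, then invoke the comparison principle of Proposition~\ref{cpb} to collapse $\overline w=\underline w$. One small slip worth fixing: the hypothesis of local $\delta$-monotonicity is $|w_{h_n}-(\varphi+\xi_n)|\leq\delta$ on the whole neighborhood $O$, not merely $|\xi_n|\leq\delta$ at the extremizer; the paper secures this by shrinking $O$ using the assumed continuity of $w_{h_n}$ and of $\varphi$, and otherwise your treatment---including the extra care you take with the growth bounds \eqref{gc} and the $t\downarrow0$ boundary behavior, which the paper dispatches in a single clause citing \eqref{dsic}---matches.
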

\begin{proof}
Take $(t,x,r)\in\;]0,T]\times\R^d\times\R$ and let us define $w^*,w_*$ as follows:
\begin{align}
w^*(t,x,r):= \limsup _{\begin{subarray}{c} h \to 0 \\ (t',x',r') \to (t,x,r) \end{subarray}} w_h(t',x',r') \quad\text{and}\quad 
w_*(t,x,r):= \liminf _{\begin{subarray}{c}  h \to 0 \\ (t',x',r') \to (t,x,r)\end{subarray}} w_h(t',x',r').\label{wssd}
 \end{align}
These quantities are known as the classical half-relaxed limits and, due to the local stability assumption, $w^*$ and $w_*$ are well-defined. Suppose first that $w^*$ and $w_*$ are viscosity sub- and supersolution of \eqref{hab},  respectively, and verify
 \begin{equation}
 \limsup_{t\rightarrow 0} w^*(t,x,r)-w_*(t,x,r)\leq 0, \label{vssl}
\end{equation}
 whence we can infer  (Proposition \ref{cpb}) that $w^*\leq w_*$. Since we also have that $w^*\geq w_*$, by definition \eqref{wssd}, we then obtain that $w^*=w_*$ is the unique viscosity solution of \eqref{hab}. Hence, it is sufficient to show that $w^*$ and $w_*$ are viscosity sub- and supersolution of \eqref{hab}, respectively. 

We start by proving that $w^*$ is a subsolution. To this end, take $\varphi\in \C^{1,1,2}(]0,T]\times\R^{d}\times\R )$ such that $w^*-\varphi$ attains its maximum on a bounded open set $O,$ at some $(T-t^*,x^*,r^*)\in\;]0,T]\times\R^d\times\R$. As already argued, by translating $\varphi$ if necessary, we can w.l.o.g suppose that 
\begin{equation}
(w^*-\varphi)(T-t^*,x^*,r^*)=0,\label{0}
\end{equation}
and that this maximum can be taken as strict.
Due to the definition of $w^*$, we can find sequences $h_n$ and $(T-t^{h_n},x^{h_n},r^{h_n})\in O$, such that $h_n\downarrow0,\;(T-t^{h_n},x^{h_n},r^{h_n})\to(T-t^*,x^*,r^*)
$ and 
\begin{equation}
(w_{h_n}-\varphi)(T-t^{h_n},x^{h_n},r^{h_n})-h_n\uparrow (w^*-\varphi)(T-t^*,x^*,r^*). \label{52}
\end{equation}
Hence, by taking a subsequence if necessary, we have that $(w_{h_n}-\varphi)$ also attains its maximum on $O$,  at some $(T-t^{h_n},x^{h_n},r^{h_n})$, i.e.,
\begin{equation}
w_{h_n}(T-t,x,r)\leq \varphi (T-t,x,r)+ (w_{h_n}-\varphi)(T-t^{h_n},x^{h_n}).
\end{equation}
Indeed, for $(T-t,x,r)\in O$ we can write on one hand
\begin{align*}
 (w^*-\varphi)(T-t^*,x^*,r^*)&>(w^*-\varphi)(T-t,x,r)\\
 &= \limsup _{\begin{subarray}{c} h \to 0 \\ (t',x',r') \to (t,x,r) \end{subarray}} w_h(t',x',r')-\varphi(T-t,x,r)\\
 &\geq w_{h_n}(T-t,x,r)-\varphi(T-t,x,r)-h_n,
\end{align*}
due to \eqref{52}, for all $n$ taken large enough. On the other hand, we can also write (by using again \eqref{52})
\begin{align*}
 (w^*-\varphi)(T-t^*,x^*,r^*)&\geq(w_{h_n}-\varphi)(T-t^{h_n},x^{h_n},r^{h_n})-h_n\\
 &>(w^*-\varphi)(T-t,x,r),
\end{align*}
for some $n\in\N$ taken large enough. Further, using \eqref{0} and the continuity of both $w_{h_n}$ (see preceding remark) and $\varphi$ (taking $O$ smaller if necessary), we have that $|w_{h_n}-(\varphi+m_n)|\leq \delta$ on $O$,  where $$m_n:=(w_{h_n}-\varphi)(T-t^{h_n},x^{h_n},r^{h_n}).$$ Applying the $\delta$-monotonicity property of the scheme to $\varphi+m_n$ and using the fact that $w_{h_n}$ is a solution of \eqref{ds} yields:
\[
S(h^n,T-t^{h_n},x^{h_n},r^{h_n},\varphi(T-t^{h_n},x^{h_n},r^{h_n}), [\varphi+m_n]_{t,x,r})\leq 0.
\]
Utilizing moreover the fact that, as $h^n\to0,$ it holds that $m_n\to (w^*-\varphi)(T-t^*,x^*,r^*)$ and the consistency of the scheme, we  infer that
\[
\bigg((\varphi+\widetilde{u})_t -\frac{(x^*)^\top\Sigma x^*}2(\varphi+\tilde{u})_r^2-\inf_{\xi\in \R^d}\widetilde{\cL}^{\xi}(\varphi+\tilde{u})\bigg)(T-t^*,x^*,r^*)\leq0,
\]
which proves that $w^*$ is a subsolution of \eqref{hab}. In the same manner, we can prove that $w_*$ is a viscosity supersolution. Since we also have that \eqref{vssl} is verified, due to \eqref{dsic}, our theorem is established.
\end{proof}
In the next step, we are going to apply the preceding results to construct converging numerical schemes. In particular, we will deal with two types  of schemes: explicit and implicit schemes. While the first one is easy to apply, it  also requires us to take a very small time step, compared to the other step parameters, whereas the  second one does not have any restriction at all with the time step. It is  however essentially more difficult to numerically apply the implicit scheme. For the sake of simplicity, we will restrict ourselves to the three-dimensional case (i.e.,  $d=1$).
\subsection{Construction of a converging explicit scheme}
Establishing the local $\delta$-monotonicity property of a scheme can be very challenging, in general, even in linear cases. This is mostly the case for explicit schemes for the equation \eqref{hab}, which shows that the Barles-Souganidis convergence result is quite difficult to apply, here. Before we construct such a scheme, we first need to make the following assumptions: 
 \begin{Ass}\label{lo}
 We restrict ourselves to the situation where the solution of \eqref{hab} is locally Lipschitz-continuous in the second parameter $x$, i.e.,  for every bounded set $O\subset\;]0,T[\times\R^{d}\times\R$, there exists $L_O>0$ such that, for every $(t,x,r)\in O$ we have
\[
\limsup_{h\rightarrow0}\bigg|\frac{W(t,x+h,r)-W(t,x,r)}h\bigg|\leq L_O.
\]
We suppose that this is also the case for the partial derivative $W_r$, i.e.,   for every bounded set $O\subset\;]0,T[\times\R^{d}\times\R$, there exists $\overline{K}'_O>0$ such that, for every $(t,x,r)\in O$  
\[
\limsup_{h\rightarrow0}\bigg|\frac{W_r(t,x,r+h)-W_r(t,x,r)}h\bigg|\leq \overline{K}'_O.
\]
\end{Ass}
\begin{rem}
Since $W_r$ is continuous, we automatically have  that $W$ is locally Lipschitz-continuous in its third parameter, $r$. Hence, there exists $\overline{K}_O>0$  such that, for every $(t,x,r)\in~O$ 
\[
\limsup_{h\rightarrow0}\bigg|\frac{W(t,x,r+h)-W(t,x,r)}h\bigg|\leq \overline{K}_O.
\]
\xqed\diamondsuit
\end{rem}
Even by considering a simple standard explicit scheme with no drift, it seems to be difficult, even impossible, to establish a condition  on $\Delta t, \Delta x, \Delta r$ such that such scheme fulfils a (local) monotonicity property. 
 We thus need to modify our preceding scheme by taking into account  the following facts:
\begin{enumerate}
\item Starting from the upwind schemes for $\tilde{w}_x$, $$ \tilde{w}_x=\tilde{w}_i-\tilde{w}_{i-1}\quad \text{and} \quad-\tilde{w}_x=\tilde{w}_i-\tilde{w}_{i+1},$$ and using $|x|=\max(x,-x)$,  $x^2=|x|^2,$ we can obtain the following scheme for $\tilde{w}_x^2$:
\begin{align*}
 \tilde{w}_x^2&=\frac1{\Delta x} \max (\tilde{w}_i-\tilde{w}_{i-1},\tilde{w}_i-\tilde{w}_{i+1},0)^2,
\end{align*} 
 in which we omit the index of the non-concerned terms. 
\item Since $W_r$ is continuous, we can approximate it by either $(\tilde{w}_{k}-\tilde{w}_{k-1})/\Delta r$ or $(\tilde{w}_{k+1}-\tilde{w}_{k})/\Delta r$. Since $V_r$ is strictly positive on $]0,T]\times\R^d\times\R$, we have that $W_r=\log(B-V)_r$ is strictly negative and hence, on every bounded set $O\subset\;]0,T]\times\R^d\times\R$ there exists  $K_O>0$ such that $W_r<-K_O$ on $O$. Thus, we can suppose that 
\begin{equation}
\max\big\{(\tilde{w}_{k+1}-\tilde{w}_{k})/\Delta r,(\tilde{w}_{k}-\tilde{w}_{k-1})/\Delta r\big\} <-K_O.\label{ko}
\end{equation}
\end{enumerate}
These considerations show that we may have to consider the following explicit scheme:
\begin{IEEEeqnarray*}{rCl}
\IEEEeqnarraymulticol{3}{l}{S(h,\Delta t,\Delta x,\Delta r,\tilde{w}_{i,k}^{n+1},[\tilde{w}_{i+1,k}^n, \tilde{w}^n_{i-1,k}, \tilde{w}_{i,k+1}^n, \tilde{w}^n_{i,k-1}, \tilde{w}_{i,k}^{n}])}\\
&=&\frac{\tilde{w}_{i,k}^{n+1}-\tilde{w}_{i,k}^n}{\Delta t}+\frac12\bigg(\frac{i\Delta x\sigma}{\Delta r}\bigg)^2 \big(\tilde{w}_{i,k}^n -\tilde{w}_{i,k-1}^n +\tilde{w}_{i,k}^n-\tilde{w}_{i,k+1}^n -\>\big(\tilde{w}_{i,k}^n -\tilde{w}_{i,k+1}^n \big)^2\big)\\
&&-\>\frac{\Delta r}{4\lambda(\Delta x)^2}\cdot\frac{\max\big(\tilde{w}_{i,k}^n -\tilde{w}_{i-1,k}^n,\tilde{w}_{i,k}^n -\tilde{w}_{i+1,k}^n,0 \big)^2 }{\tilde{w}_{i,k}^n -\tilde{w}_{i,k-1}^n }.                        
\end{IEEEeqnarray*}
Its stencil is represented below:\\
\begin{tikzpicture}
\matrix (m) [matrix of math nodes, row sep=3em,
column sep=3em]{
& \tilde{w}^{n}_{i,k+1}& &  \\
\tilde{w}^{n}_{i-1,k} &\tilde{w}^{n}_{i,k} & \tilde{w}^{n}_{i+1,k} & \\
&\tilde{w}^{n}_{i,k-1}  & \tilde{w}^{n+1}_{i,k}&  \\};
 \path
(m-1-2) [densely dotted] edge (m-2-2)
(m-2-1) edge (m-2-2)
(m-2-2)edge (m-2-3)
(m-3-2)edge (m-2-2);
 \path[-stealth]
 (m-2-2) edge (m-3-3);
\end{tikzpicture}

In the following, we show that this scheme converges  to the unique viscosity solution of \eqref{hab}. We begin by  proving the local $\delta$-monotonicity of the scheme, where it is moreover shown that $\delta$ can be taken as $1/2$. To this end, take an open bounded set $O\subset\;]0,T]\times \R^d\times\R$. First, note that our scheme $S$  is unconditionally decreasing in $\tilde{w}_{i,k-1}$. It is also nonincreasing in $\tilde{w}_{i+1,k}^n$ and in $\tilde{w}^n_{i-1,k}$ (recall that $(\tilde{w}_{k}-\tilde{w}_{k-1})/\Delta r<0$). Further, $S$ is nonincreasing in $\tilde{w}^{n}_{i,k+1}$ for $|\tilde{w}^{n}_{i,k}-\tilde{w}^{n}_{i,k+1}|\leq1/2$, because the function $x-x^2$ is nondecreasing for $-1/2\leq x\leq 1/2$.\\
We now prove  that $S$ is nonincreasing in $\tilde{w}^{n}_{i,k}$. This is the most difficult part of proving the monotonicity property of $S$, and we will only give  a sufficient condition for it (CFL-type condition).\\
\underline{First case:}  $\max\big(\tilde{w}_{i,k}^n -\tilde{w}_{i-1,k}^n,\tilde{w}_{i,k}^n -\tilde{w}_{i+1,k}^n,0 \big)=0$.\\
Consider the function
$$ \psi_1: \tilde{w}^{n}_{i,k}\longmapsto -\tilde{w}^{n}_{i,k}+ \frac{\Delta t}2\bigg(\frac{i\Delta x\sigma}{\Delta r}\bigg)^2 \big(2\tilde{w}^{n}_{i,k}-(\tilde{w}^{n}_{i,k}-\tilde{w}^{n}_{i,k+1})^2\big),$$
whose derivative is given by 
$$ \psi'_1: \tilde{w}^{n}_{i,k}\longmapsto -1+ \frac{\Delta t}2\bigg(\frac{i\Delta x\sigma}{\Delta r}\bigg)^2 \big(2-2(\tilde{w}^{n}_{i,k}-\tilde{w}^{n}_{i,k+1})\big).$$
Then, for $|\tilde{w}^{n}_{i,k}-\tilde{w}^{n}_{i,k+1}|\leq1/2$ and $\displaystyle{3\Delta t/2(i\Delta x\sigma/\Delta r)^2}\leq1$, we have that $\psi'_1\leq0$, and  $S$ is hence nonincreasing in $\tilde{w}_{i,k}^n$.\\
\underline{Second case:}  $\max\big(\tilde{w}_{i,k}^n -\tilde{w}_{i-1,k}^n,\tilde{w}_{i,k}^n -\tilde{w}_{i+1,k}^n,0 \big)\neq 0$.\\
We can suppose w.l.o.g. that $\max\big(\tilde{w}_{i,k}^n -\tilde{w}_{i-1,k}^n,\tilde{w}_{i,k}^n -\tilde{w}_{i+1,k}^n,0 \big)=\tilde{w}_{i,k}^n -\tilde{w}_{i-1,k}^n$. Consider now the following function:
$$ \psi_2: \tilde{w}^{n}_{i,k}\longmapsto -\tilde{w}^{n}_{i,k}+ \frac{\Delta t}2\bigg(\frac{i\Delta x\sigma}{\Delta r}\bigg)^2 \big(2\tilde{w}^{n}_{i,k}-(\tilde{w}^{n}_{i,k}-\tilde{w}^{n}_{i,k+1})^2\big)
-\frac{\Delta r\Delta t}{4\lambda(\Delta x)^2}\frac{(\tilde{w}_{i,k}^n -\tilde{w}_{i-1,k}^n)^2 }{\tilde{w}_{i,k}^n -\tilde{w}_{i,k-1}^n},$$
whose derivative is given by
\begin{IEEEeqnarray*}{rCl} 
\IEEEeqnarraymulticol{3}{l}{\psi'_2: \tilde{w}^{n}_{i,k}\longmapsto -1+ \frac{\Delta t}2\bigg(\frac{i\Delta x\sigma}{\Delta r}\bigg)^2 \big(2-2(\tilde{w}^{n}_{i,k}-\tilde{w}^{n}_{i,k+1})\big)}\\
&&-\frac{\Delta r\Delta t}{4\lambda(\Delta x)^2}\frac{(\tilde{w}_{i,k}^n -\tilde{w}_{i-1,k}^n)(\tilde{w}_{i,k}^n -\tilde{w}_{i,k+1}^n+\tilde{w}_{i-1,k}^n-\tilde{w}_{i,k}^n+\tilde{w}_{i,k}^n-\tilde{w}_{i,k+1}^n) }{(\tilde{w}_{i,k}^n -\tilde{w}_{i,k-1}^n)^2}.
\end{IEEEeqnarray*}
As $W$ is known to be continuous, it is uniformly continuous on any bounded set $O\subset\;]0,T]\times\R\times\R$ (where $\overline{O}\subset \;]0,T]\times\R\times\R$) and thus, there exists $h>0$ such that  $|\tilde{w}^{m}_{j,l}-\tilde{w}^{m'}_{j',l'}|\leq1/2,$ for $|(m,j,l)-(m',j',l')|\leq h$. We denote by $X_O$ the maximum value of $|i\Delta x|$ on $O\cap\R$. Using the fact that  $|(\tilde{w}^{m}_{j,l}-\tilde{w}^{m}_{j+1,l})/\Delta r|\geq K_O$ on $O$ (due to \eqref{ko}), we infer 
\begin{align*}
\psi'_2( \tilde{w}^{n}_{i,k})&\leq -1+ \frac{3\Delta t}2\bigg(\frac{i\Delta x\sigma}{\Delta r}\bigg)^2 +\frac{\Delta r\Delta t}{4\lambda(\Delta x)^2}\frac{(1/2)(3/2) }{(\Delta r)^2K_O^2}\\
&= -1+ \frac{3\Delta t}{8\lambda}\bigg(\frac{(i\Delta x\sigma)(\Delta x)^2 +\Delta r}{(\Delta x)^2(\Delta r)^2K_O^2}\bigg)\\
&\leq 0,
\end{align*}
for 
\begin{equation}
\frac{3\Delta t}{8\lambda}\bigg(\frac{(X_O\sigma)^2(\Delta x)^2K_O^2+\Delta r}{(\Delta r)^2(\Delta x)^2 K_O^2}\bigg)\leq 1.\label{cfl1}
\end{equation}
The condition \eqref{cfl1} can be regarded as the Courant Friedrichs Lewy (CFL) condition for this explicit scheme.\\
It remains to prove the consistency and local stability of the scheme.  Classical computations using the Taylor expansion yield:
\begin{align*}
 \frac{\tilde{w}_{i,k+1}^{n}+\tilde{w}^{n}_{i,k-1}-2\tilde{w}_{i,k}^{n}}{(\Delta r)^2}&=\tilde{w}_{rr}(n\Delta t, i\Delta x, k\Delta r)\\
 &\quad+\>\frac1{12}\tilde{w}_{rrrr}(n\Delta t, i\Delta x, k\Delta r) (\Delta r)^2+ o(\Delta r)^2,\\
 \frac{\tilde{w}_{i,k+1}^{n}-\tilde{w}_{i,k}^{n} }{\Delta r}&=\tilde{w}_r(n\Delta t, i\Delta x, k\Delta r)+\frac12\tilde{w}_{rr}(n\Delta t, i\Delta x, k\Delta_r)\Delta r\\
 &\quad+\>o(\Delta r),\\
 \frac{\tilde{w}_{i,k}^{n}-\tilde{w}_{i,k-1}^{n} }{\Delta r}&=\tilde{w}_r(n\Delta t, i\Delta x, (k-1)\Delta r)\\
          &\quad+\>\frac12\tilde{w}_{rr}(n\Delta t, i\Delta x, (k-1)\Delta_r)\Delta r+o(\Delta r),\\
  \frac{\tilde{w}_{i+1,k}^{n}-\tilde{w}_{i,k}^{n}}{\Delta x}&=\tilde{w}_x(n\Delta t, i\Delta x, k\Delta r)+\frac12\tilde{w}_{xx}(n\Delta t, i\Delta x, k\Delta r)\Delta x\\
  &\quad+\>o(\Delta x),\\
   \frac{\tilde{w}_{i,k}^{n}-\tilde{w}_{i-1,k}^{n}}{\Delta x}&=\tilde{w}_x(n\Delta t, (i-1)\Delta x, k\Delta r)\\
   &\quad+\>\frac12\tilde{w}_{xx}(n\Delta t, (i-1)\Delta x, k\Delta r)\Delta x+o(\Delta x),\\
  \frac{\tilde{w}_{i,k}^{n+1}-\tilde{w}_{i,k}^n }{\Delta_t}&=\tilde{w}_t((n+1)\Delta t, i\Delta x, k\Delta r)+\frac12\tilde{w}_{tt}(n\Delta t, i\Delta x, k\Delta_r)\Delta t\\
  &\quad+\>o(\Delta t).
 \end{align*}
 Hence, the consistency of the scheme follows from the continuity of the auxiliary HJB operator (note that the truncation error is at most of order one in each parameter, for the approximation of the first derivatives).\\
\indent We now prove the local stability. To this end, set $\I_O:=\{-p,\dots,p\}\times\{-q,\dots,q\}$, where $p,q\in\N$ are the largest possible natural numbers such that $$[-p\Delta x, p\Delta x]\times [-q\Delta r,q\Delta r]\subset P_r(O),$$ with $P_r$ denoting the orthogonal projection of $]0,T]\times\R\times\R$ on $\R\times\R$. Using Assumption \ref{lo}, we can write
\begin{IEEEeqnarray*}{rCl}
\big|\tilde{w}_{i,k}^{n+1}\big|&=&\bigg|\tilde{w}_{i,k}^n-\frac{\Delta t}2\bigg(\frac{i\Delta x\sigma}{\Delta r}\bigg)^2 \big(\tilde{w}_{i,k}^n -\tilde{w}_{i,k-1}^n +\tilde{w}_{i,k}^n-\tilde{w}_{i,k+1}^n -\>\big(\tilde{w}_{i,k}^n -\tilde{w}_{i,k+1}^n \big)^2\big)\\
&&+\>\frac{\Delta r\Delta t}{(\Delta x)^2}\frac{\max\big(\tilde{w}_{i,k}^n -\tilde{w}_{i-1,k}^n,\tilde{w}_{i,k}^n -\tilde{w}_{i+1,k}^n,0 \big)^2 }{\tilde{w}_{i,k}^n -\tilde{w}_{i,k-1}^n }\bigg|\\                 
&\leq& \big|\tilde{w}_{i,k}^n\big|+ \frac{\Delta t}2\big(i\Delta x\sigma\big)^2 \bigg|\frac{\tilde{w}_{i,k}^n -\tilde{w}_{i,k-1}^n +\tilde{w}_{i,k}^n-\tilde{w}_{i,k+1}^n}{\Delta r} -\bigg(\frac{\tilde{w}_{i,k}^n -\tilde{w}_{i,k+1}^n}{\Delta r} \bigg)^2\bigg|\\
&&-\>\frac{\Delta r\Delta t}{(\Delta x)^2}\frac{\max\big(\tilde{w}_{i,k}^n -\tilde{w}_{i-1,k}^n,\tilde{w}_{i,k}^n -\tilde{w}_{i+1,k}^n,0 \big)^2 }{\tilde{w}_{i,k}^n -\tilde{w}_{i,k-1}^n }\\
&\leq& \big|\tilde{w}_{i,k}^n\big|+ \Delta t\big(X_O\sigma\big)^2(\overline{K}'_O+\overline{K}^2_O)\\
&&+\>\frac{\Delta t}{(\Delta x)^2}\frac{\max\big(\tilde{w}_{i,k}^n -\tilde{w}_{i-1,k}^n,\tilde{w}_{i,k}^n -\tilde{w}_{i+1,k}^n,0 \big)^2 }{K_O }\\
&\leq& \big|\tilde{w}_{i,k}^n\big|+ \Delta t\big(X_O\sigma\big)^2(\overline{K}'_O+\overline{K}^2_O)+\frac{\Delta t L_O^2}{K_O},
\end{IEEEeqnarray*}
which implies that
\begin{IEEEeqnarray*}{rCl}
\max_{i,k\in\I_O}\big|\tilde{w}_{i,k}^{n+1}\big|&\leq& \max_{i,k\in\I_O}\big|\tilde{w}_{i,k}^0\big|+ n\Delta t\big(X_O\sigma\big)^2(\overline{K}'_O+\overline{K}^2_O) +n\frac{\Delta t L_O^2}{K_O}\\
&=&\max_{i,k\in\I_O}\big|\tilde{w}_{i,k}^0\big|+ T\big(X_O\sigma\big)^2(\overline{K}'_O+\overline{K}^2_O) +\frac{T L_O^2}{K_O}\\
&<& \infty,
\end{IEEEeqnarray*}
and this proves the stability of the scheme. We have thus established that this explicit scheme converges to the viscosity solution of \eqref{hab}.

Let us now consider the more general case. We will need the following lemma.
\begin{Lem}\label{sgi}
Take $X\in\R^d$. Then  the map
\begin{equation}
\begin{array}{rll}
]0,\infty[&\longrightarrow& \R\\
\widetilde{f}^*_X:T&\longmapsto &Tf^*\big(-\frac{X}T\big)\\
\end{array}
\end{equation}
is strictly decreasing in $T$.
\end{Lem}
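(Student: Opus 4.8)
The plan is to \emph{dualize}. By definition of the Fenchel--Legendre transform $f^*(z)=\sup_{y\in\R^d}(z\cdot y-f(y))$, and since multiplication by $T>0$ commutes with the supremum,
\[
\widetilde f^*_X(T)=Tf^*(-X/T)=\sup_{y\in\R^d}\big(-X\cdot y-Tf(y)\big),\qquad T>0 .
\]
This presents $\widetilde f^*_X$ as the upper envelope of the affine maps $T\mapsto -X\cdot y-Tf(y)$, $y\in\R^d$, each of which has non-positive slope $-f(y)$ because $f\ge0$. Hence $\widetilde f^*_X$ is convex and, for free, \emph{non-increasing}; the real content of the lemma is to upgrade this to \emph{strictly} decreasing.

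To do so, fix $0<T_1<T_2$. Superlinear growth of $f$ makes $y\mapsto -X\cdot y-T_2 f(y)$ concave and coercive, so its supremum is attained at some $y_2\in\R^d$, and then
\[
\widetilde f^*_X(T_1)\ \ge\ -X\cdot y_2-T_1 f(y_2)\ =\ \widetilde f^*_X(T_2)+(T_2-T_1)\,f(y_2).
\]
Since $f$ is nonnegative, strictly convex and $f(0)=0$, its only zero is the origin, so it is enough to rule out $y_2=0$. But $y_2=0$ would make $0$ a minimizer of $y\mapsto X\cdot y+T_2 f(y)$, hence $0\in X+T_2\,\partial f(0)$; and since $0$ is the unique minimizer of the strictly convex $f$ one has $\partial f(0)=\{0\}$, forcing $X=0$. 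Therefore, for $X\neq0$, $f(y_2)>0$ and $\widetilde f^*_X(T_1)>\widetilde f^*_X(T_2)$, which is the claim. (If $X=0$ then $\widetilde f^*_X\equiv T f^*(0)=0$, so ``strictly decreasing'' is to be read for $X\neq0$.)

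As an independent check one can differentiate: strict convexity of $f$ gives $f^*\in\C^1$, so $\widetilde f^*_X$ is $\C^1$ on $]0,\infty[$, and the Fenchel equality $f^*(z)+f(\nabla f^*(z))=z\cdot\nabla f^*(z)$ yields, after a one-line computation, $\big(\widetilde f^*_X\big)'(T)=-f\big(\nabla f^*(-X/T)\big)<0$ for all $T>0$ whenever $X\neq0$. The main obstacle is precisely this strictness: the envelope representation makes monotonicity immediate, but excluding a flat stretch requires showing the dual maximizer $\nabla f^*(-X/T)$ never collapses to $0$, and that is where $f\ge0$, strict convexity, $f(0)=0$ (so that $\partial f(0)=\{0\}$) and $X\neq0$ all enter together.
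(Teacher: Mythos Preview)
Your argument is correct, and the envelope representation $\widetilde f^*_X(T)=\sup_{y\in\R^d}\big(-X\cdot y - Tf(y)\big)$ is a genuinely different route from the paper's. The paper differentiates $\widetilde f^*_X$ directly and then applies the strict subgradient inequality for $f^*$ between $a=-X/T$ and $b=0$, obtaining $a\cdot\nabla f^*(a)>f^*(a)$ and hence $(\widetilde f^*_X)'(T)<0$. Your ``independent check'' via the Fenchel identity $f^*(z)+f(\nabla f^*(z))=z\cdot\nabla f^*(z)$ is that same computation recast in primal terms, and your formula $(\widetilde f^*_X)'(T)=-f\big(\nabla f^*(-X/T)\big)$ is in fact a cleaner way to read the paper's conclusion. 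What is new in your write-up is the primal envelope argument: it needs no differentiability of $f^*$ to get ``non-increasing'', it makes convexity of $\widetilde f^*_X$ in $T$ transparent, and it pinpoints the strictness issue as exactly the question of whether the dual maximizer can collapse to $0$.

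One caveat shared by both your proof and the paper's: the step ``$0$ is the unique minimizer of the strictly convex $f$, hence $\partial f(0)=\{0\}$'' (and the paper's parallel claim that strict convexity of $f$ forces $f^*$ to be strictly convex) is not valid in general. For instance $f(x)=|x|+x^2$ on $\R$ is strictly convex with unique minimizer $0$, yet $\partial f(0)=[-1,1]$, $f^*\equiv 0$ on $[-1,1]$, and then $T\mapsto Tf^*(-X/T)$ is constant for $T\ge |X|$, so the lemma fails as stated. Both arguments go through once $f$ is differentiable at $0$ (equivalently $\partial f(0)=\{0\}$), which holds for every concrete $f$ the paper actually uses.
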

\begin{proof}
First note that, due to the strict convexity of $f$, $f^*$ is also strictly convex and hence  fulfills the following subgradient inequality,
\[
f^*(b)-f^*(a)>(b- a)\cdot \nabla f^*(a).
\]
Setting now $b=0$ in the preceding inequality,  we get 
\begin{equation}
 a \cdot\nabla f^*(a)> f^*(a)\geq 0,\label{ssgi}
\end{equation}
because $f^*(0)=0$.
Computing the derivative of $\widetilde{f}^*_X$ with respect to $T$ we obtain
\[
\widetilde{f^*}'_X(T)=f^*\Big(-\frac{X}T\Big)-\frac{X}T\nabla_x f^*\Big(-\frac{X}T\Big),
\]
which is strictly negative, due to the preceding subgradient inequality.
\end{proof}
Suppose that $f$ is symmetric (i.e., $f(x)=f(-x),\>\; \forall x\in\R$) and $b\neq0$. (Note that this symmetry also holds for $f^*$). Since $f^*(w_x)=f^*(|w_x|)$, we obtain the following expression (scheme) for the term $f^*(w_x/w_r)$:
\[
f^*\bigg(\frac{\Delta r}{\Delta x}\cdot\frac{\max\big(\tilde{w}_{i,k}^{n} -\tilde{w}_{i-1,k}^{n},\tilde{w}_{i,k}^{n} -\tilde{w}_{i+1,k}^{n} \big) }{ \tilde{w}_{i,k}^{n} -\tilde{w}_{i,k-1}^{n} }\bigg).
\]
Therefore we can derive the following generalization of the preceding scheme:
\begin{IEEEeqnarray*}{rCl}
\IEEEeqnarraymulticol{3}{l}{S(h,\Delta t,\Delta x,\Delta r,\tilde{w}_{i,k}^{n+1},[\tilde{w}_{i+1,k}^{n}, \tilde{w}^{n}_{i-1,k}, \tilde{w}_{i,k+1}^{n}, \tilde{w}^{n}_{i,k-1}, \tilde{w}_{i,k}^{n}])}\\
&=&\frac{\tilde{w}_{i,k}^{n+1}-\tilde{w}_{i,k}^n}{\Delta t}+\frac12\bigg(\frac{i\Delta x\sigma}{\Delta r}\bigg)^2 \big(\tilde{w}_{i,k}^{n} -\tilde{w}_{i,k-1}^{n} +\tilde{w}_{i,k}^{n}-\tilde{w}_{i,k+1}^{n} \\
&&-\>\big(\tilde{w}_{i,k}^{n} -\tilde{w}_{i,k+1}^{n} \big)^2\big)-b\cdot i\,\Delta x\, F_{b,k}(\tilde{w}_{i,k}^n)\\
&&-\>\frac{\tilde{w}_{i,k}^{n} -\tilde{w}_{i,k-1}^{n} }{\Delta r}f^*\bigg(\frac{\Delta r}{\Delta x}\frac{\max\big(\tilde{w}_{i,k}^{n} -\tilde{w}_{i-1,k}^{n},\tilde{w}_{i,k}^{n} -\tilde{w}_{i+1,k}^{n},0 \big) }{ \tilde{w}_{i,k}^{n} -\tilde{w}_{i,k-1}^{n} }\bigg),\\
 \tilde{w}^0_{i,k}&=&0,                      
\end{IEEEeqnarray*}
where 
\begin{equation*}
F_{b,k}(\tilde{w}_{i,k}^n)=\begin{cases}
                                             \frac{\tilde{w}_{i,k+1}^{n} -\tilde{w}_{i,k}^{n} }{\Delta r},& \text{if}\; \sgn(b\cdot i)>0,\\
                                            \frac{\tilde{w}_{i,k}^{n} -\tilde{w}_{i,k-1}^{n} }{\Delta r},& \text{if}\; \sgn(b\cdot i)\leq0.\\
                      \end{cases}  
\end{equation*}         
Using Lemma \ref{sgi}, we have 
\[
-\>\frac{\tilde{w}_{i,k}^{n} -\tilde{w}_{i,k-1}^{n} }{\Delta r}f^*\bigg(\frac{\Delta r}{\Delta x}\frac{\max\big(\tilde{w}_{i,k}^{n} -\tilde{w}_{i-1,k}^{n},\tilde{w}_{i,k}^{n} -\tilde{w}_{i+1,k}^{n},0 \big) }{ \tilde{w}_{i,k}^{n} -\tilde{w}_{i,k-1}^{n} }\bigg)
\]
is nonincreasing in $\tilde{w}_{i,k-1}^{n}$. Due to the definition of $F_{b,k},$ it is also nonincreasing in $\tilde{w}_{i,k-1}^{n}$,  and the scheme is hence unconditionally nonincreasing in this parameter. 
Noting that $\tilde{w}_{i,k}^{n} -\tilde{w}_{i,k-1}^{n}<0$, and using the fact that $f^*$ is decreasing on $]-\infty,0]$ and increasing on $[0,\infty[$ (due to its positivity, convexity and the fact that $f^*(0)=0$), it follows that the scheme is nonincreasing in both $\tilde{w}_{i-1,k}^{n}$ and $ \tilde{w}_{i+1,k}^{n}$. Again, the definition of $F_{b,k}$ and the same argumentation as before (for $|\tilde{w}^{n}_{i,k}-\tilde{w}^{n}_{i,k+1}|\leq1/2$, as seen above) allow us to deduce that the scheme is nonincreasing in $ \tilde{w}_{i,k+1}^{n}$. We now present a sufficient condition under which $S$ is nonincreasing in $\tilde{w}^{n}_{i,k}$. \\
\underline{First case:}  $\max\big(\tilde{w}_{i,k}^n -\tilde{w}_{i-1,k}^n,\tilde{w}_{i,k}^n -\tilde{w}_{i+1,k}^n,0 \big)=0$.\\
Consider the function
$$ \psi_3: \tilde{w}^{n}_{i,k}\longmapsto -\tilde{w}^{n}_{i,k}+ \frac{\Delta t}2\bigg(\frac{i\Delta x\sigma}{\Delta r}\bigg)^2 \big(\tilde{w}^{n}_{i,k}-(\tilde{w}^{n}_{i,k}-\tilde{w}^{n}_{i,k+1})^2\big)-b\cdot i\,\Delta x\Delta t\, F_{b,k}(\tilde{w}_{i,k}^n).$$
Its derivative is given by 
$$ \psi'_3: \tilde{w}^{n}_{i,k}\longmapsto -1+ \frac{\Delta t}2\bigg(\frac{i\Delta x\sigma}{\Delta r}\bigg)^2 \big(1-2(\tilde{w}^{n}_{i,k}-\tilde{w}^{n}_{i,k+1})\big)+|b\cdot i\Delta x|\,\frac{\Delta t}{\Delta r}.$$
Then, for
$$|\tilde{w}^{n}_{i,k}-\tilde{w}^{n}_{i,k+1}|\leq1/2\quad\text{and}\quad\Delta t\Big(\frac32\Big(\frac{i\Delta x\sigma}{\Delta r}\Big)^2+\frac{|b\cdot i\Delta x|}{\Delta r}\Big)\leq1,$$ we have that $\psi'_3\leq0$, and  $S$ is hence nonincreasing in $\tilde{w}_{i,k}^n$.\\
\underline{Second case:}  $\max\big(\tilde{w}_{i,k}^n -\tilde{w}_{i-1,k}^n,\tilde{w}_{i,k}^n -\tilde{w}_{i+1,k}^n,0 \big)\neq 0$.\\
We can suppose w.l.o.g. that $\max\big(\tilde{w}_{i,k}^n -\tilde{w}_{i-1,k}^n,\tilde{w}_{i,k}^n -\tilde{w}_{i+1,k}^n,0 \big)=\tilde{w}_{i,k}^n -\tilde{w}_{i-1,k}^n$. Consider now the following function
\begin{IEEEeqnarray*}{rCl} 
\IEEEeqnarraymulticol{3}{l}{\psi_4: \tilde{w}^{n}_{i,k}\longmapsto -\tilde{w}^{n}_{i,k}+ \frac{\Delta t}2\bigg(\frac{i\Delta x\sigma}{\Delta r}\bigg)^2 \big(\tilde{w}^{n}_{i,k}-(\tilde{w}^{n}_{i,k}-\tilde{w}^{n}_{i,k+1})^2\big)
-b\cdot i\,\Delta x\Delta t\, F_{b,k}(\tilde{w}_{i,k}^n)}\\
&&-\>\Delta t\frac{\tilde{w}_{i,k}^{n} -\tilde{w}_{i,k-1}^{n} }{\Delta r}f^*\bigg(\frac{\Delta r}{\Delta x}\frac{\tilde{w}_{i,k}^n -\tilde{w}_{i-1,k}^n}{ \tilde{w}_{i,k}^{n} -\tilde{w}_{i,k-1}^{n} }\bigg),
\end{IEEEeqnarray*}
whose derivative is given by
\begin{IEEEeqnarray*}{rCl} 
\IEEEeqnarraymulticol{3}{l}{\psi'_4: \tilde{w}^{n}_{i,k}\longmapsto -1+ \frac{\Delta t}2\bigg(\frac{i\Delta x\sigma}{\Delta r}\bigg)^2 \big(1-2(\tilde{w}^{n}_{i,k}-\tilde{w}^{n}_{i,k+1})\big)+|b\cdot i\Delta x|\,\frac{\Delta t}{\Delta r}}\\
&&-\>\frac{\Delta t}{\Delta r}\bigg( f^*\bigg(\frac{\Delta r}{\Delta x}\frac{\tilde{w}_{i,k}^n -\tilde{w}_{i-1,k}^n }{ \tilde{w}_{i,k}^{n} -\tilde{w}_{i,k-1}^{n} }\bigg)\\
&&+\>\frac{\Delta r}{\Delta x}\frac{ \tilde{w}_{i,k}^{n} -\tilde{w}_{i,k-1}^{n}-\tilde{w}_{i,k}^n +\tilde{w}_{i-1,k}^n}{\tilde{w}_{i,k}^{n} -\tilde{w}_{i,k-1}^{n}} (f^*)'\bigg(\frac{\Delta r}{\Delta x}\frac{\tilde{w}_{i,k}^n -\tilde{w}_{i-1,k}^n }{ \tilde{w}_{i,k}^{n} -\tilde{w}_{i,k-1}^{n} }\bigg)\bigg).
\end{IEEEeqnarray*}
As in the preceding special case, taking  $h>0$ such that  $|\tilde{w}^{m}_{j,l}-\tilde{w}^{m'}_{j',l'}|\leq1/2$, for $|(m,j,l)-(m',j',l')|\leq h$, and using the fact that $(f^*)'$ is negative on $]-\infty,0[$, nonnegative otherwise and  decreasing on the whole of $\R$, we can
write 
\begin{IEEEeqnarray*}{rCl} 
\IEEEeqnarraymulticol{3}{l}{f^*\bigg(\frac{\Delta r}{\Delta x}\frac{\tilde{w}_{i,k}^n -\tilde{w}_{i-1,k}^n }{ \tilde{w}_{i,k}^{n} -\tilde{w}_{i,k-1}^{n} }\bigg)}\\
&&+\>\frac{\Delta r}{\Delta x}\frac{ \tilde{w}_{i,k}^{n} -\tilde{w}_{i,k-1}^{n}-\tilde{w}_{i,k}^n +\tilde{w}_{i-1,k}^n}{\tilde{w}_{i,k}^{n} -\tilde{w}_{i,k-1}^{n}} (f^*)'\bigg(\frac{\Delta r}{\Delta x}\frac{\tilde{w}_{i,k}^n -\tilde{w}_{i-1,k}^n }{ \tilde{w}_{i,k}^{n} -\tilde{w}_{i,k-1}^{n} }\bigg)\\
&\geq&\>\frac{\Delta r}{\Delta x}\frac{ \tilde{w}_{i,k}^{n} -\tilde{w}_{i,k-1}^{n}-\tilde{w}_{i,k}^n +\tilde{w}_{i-1,k}^n}{\tilde{w}_{i,k}^{n} -\tilde{w}_{i,k-1}^{n}} (f^*)'\bigg(\frac{\Delta r}{\Delta x}\frac{\tilde{w}_{i,k}^n -\tilde{w}_{i-1,k}^n }{ \tilde{w}_{i,k}^{n} -\tilde{w}_{i,k-1}^{n} }\bigg)\\
&\geq&\>-\frac1{\Delta x K_O} (f^*)'\Big(\frac1{2 \Delta x K_O }\Big).
\end{IEEEeqnarray*}
Finally, we get with the CFL condition
$$\Delta t\Big(\frac32\Big(\frac{i\Delta x\sigma}{\Delta r}\Big)^2+\frac{|b\cdot i\Delta x|}{\Delta r}+\frac1{\Delta x \Delta r K_O} (f^*)'\Big(\frac1{2 \Delta x K_O }\Big)\Big)\leq1$$
that $\psi_4'(w_{i,k}^n)\leq 0$, and the scheme is therefore locally $\delta$-monotone.\\
The consistency of the scheme can be proved in an analogous manner as above,  using the preceding Taylor expansions and the fact that both  $\max$ and $f^*$ are continuous functions.\\
We have now left to prove  the local stability. But here again, using Assumption \ref{lo} we get
\begin{IEEEeqnarray*}{rCl}
\big|\tilde{w}_{i,k}^{n+1}\big|&=&\bigg|\tilde{w}_{i,k}^n-\frac{\Delta t}2\bigg(\frac{i\Delta x\sigma}{\Delta r}\bigg)^2 \big(\tilde{w}_{i,k}^n -\tilde{w}_{i,k-1}^n +\tilde{w}_{i,k}^n-\tilde{w}_{i,k+1}^n -\>\big(\tilde{w}_{i,k}^n -\tilde{w}_{i,k+1}^n \big)^2\big)\\
&&-\>b\cdot i\,\Delta x \Delta t\, F_{b,k}(\tilde{w}_{i,k}^n)\\
&&-\>\Delta t\frac{\tilde{w}_{i,k}^{n} -\tilde{w}_{i,k-1}^{n} }{\Delta r}f^*\bigg(\frac{\Delta r}{\Delta x}\frac{\max\big(\tilde{w}_{i,k}^{n} -\tilde{w}_{i-1,k}^{n},\tilde{w}_{i,k}^{n} -\tilde{w}_{i+1,k}^{n},0 \big) }{ \tilde{w}_{i,k}^{n} -\tilde{w}_{i,k-1}^{n} }\bigg)\\                 
&\leq& \big|\tilde{w}_{i,k}^n\big|+ \frac{\Delta t}2\bigg(\frac{i\Delta x\sigma}{\Delta r}\bigg)^2 \big|\tilde{w}_{i,k}^n -\tilde{w}_{i,k-1}^n +\tilde{w}_{i,k}^n-\tilde{w}_{i,k+1}^n -\>\big(\tilde{w}_{i,k}^n -\tilde{w}_{i,k+1}^n \big)^2\big|\\
&&-\>|b\cdot i\,\Delta x|\Delta t\, F_{b,k}(\tilde{w}_{i,k}^n)\\
&&-\>\Delta t\frac{\tilde{w}_{i,k}^{n} -\tilde{w}_{i,k-1}^{n} }{\Delta r}f^*\bigg(\frac{\Delta r}{\Delta x}\frac{\max\big(\tilde{w}_{i,k}^{n} -\tilde{w}_{i-1,k}^{n},\tilde{w}_{i,k}^{n} -\tilde{w}_{i+1,k}^{n},0 \big) }{ \tilde{w}_{i,k}^{n} -\tilde{w}_{i,k-1}^{n} }\bigg)\\   
&\leq& \big|\tilde{w}_{i,k}^n\big|+ \Delta t\big(X_O\sigma\big)^2(\overline{K}'_O+\overline{K}^2_O)+ |b|\big|X_O\big|\Delta t\overline{K}_O+\overline{K}_O\Delta t f^*\bigg(\frac{L_O}{K_O}\bigg),
 \end{IEEEeqnarray*}
which gives us recursively 
\begin{align*}
\max_{i,k\in\I_O} |\tilde{w}_{i,k}^{n+1}|&\leq\max_{i,k\in\I_0} |\tilde{w}_{i,k}^0|+ T\bigg(\big(X_O\sigma\big)^2(\overline{K}'_O+\overline{K}^2_O)
+ |b|\big|X_O\big|\overline{K}_O+\overline{K}_O f^*\bigg(\frac{L_O}{K_O}\bigg)\bigg)\\
&<\infty.
\end{align*}
Thus, the local stability is proved. This establishes that the preceding explicit scheme indeed converges to the viscosity solution.

\subsection{Construction of a converging implicit scheme}
Proving the $\delta$-monotonicity will turn out to be more obvious  for the following implicit scheme  than for the preceding explicit one. Moreover, the following implicit scheme will be  unconditionally stable. Nevertheless, there will be two main issues which restrict its use. The first one follows from the fact that terms must be obtained by implicit computations, which implies that we have to find them before using them in the scheme (by applying in general a Newton-Raphson method). In this nonlinear case, this will result in an  implementation error, which will  be combined with the  approximation error. The second issue follows from the fact that the local stability is difficult to obtain in practice (due to the appearance of a quotient term and the difficulty of computing the constants $\overline{K_O}$ and $L_{O}$, which will moreover impose restrictions on $\Delta x$ and $\Delta r$), as we will see below.
Let us consider the following scheme, where $b=0$, $f(x)=\lambda x^2,$ and $\lambda>0$.
\begin{IEEEeqnarray*}{rCl}
\IEEEeqnarraymulticol{3}{l}{S(h,\Delta t,\Delta x,\Delta r,\tilde{w}_{i,k}^{n+1},[\tilde{w}_{i+1,k}^{n+1}, \tilde{w}^{n+1}_{i-1,k}, \tilde{w}_{i,k+1}^{n+1}, \tilde{w}^{n+1}_{i,k-1}, \tilde{w}_{i,k}^{n}])}\\
&=&\frac{\tilde{w}_{i,k}^{n+1}-\tilde{w}_{i,k}^n}{\Delta t}+\frac12\bigg(\frac{i\Delta x\sigma}{\Delta r}\bigg)^2 \big(\tilde{w}_{i,k}^{n+1} -\tilde{w}_{i,k-1}^{n+1} +\tilde{w}_{i,k}^{n+1}-\tilde{w}_{i,k+1}^{n+1} \\
&&-\>\big(\tilde{w}_{i,k}^{n+1} -\tilde{w}_{i,k+1}^{n+1} \big)^2\big)-\>\frac{\Delta r}{4\lambda(\Delta x)^2}\cdot\frac{\max\big(\tilde{w}_{i,k}^{n+1} -\tilde{w}_{i-1,k}^{n+1},\tilde{w}_{i,k}^{n+1} -\tilde{w}_{i+1,k}^{n+1},0 \big)^2 }{\tilde{w}_{i,k}^{n+1} -\tilde{w}_{i,k-1}^{n+1} },\\
\tilde{w}^0_{i,k}&=&0,                                        
\end{IEEEeqnarray*}
whose stencil is represented below as:\\
\begin{tikzpicture}
\matrix (m) [matrix of math nodes, row sep=3em,
column sep=3em]{
\tilde{w}^{n}_{i,k}& \tilde{w}^{n+1}_{i,k+1}& &  \\
\tilde{w}^{n+1}_{i-1,k} &\tilde{w}^{n+1}_{i,k} & \tilde{w}^{n+1}_{i+1,k} & \\
& \tilde{w}^{n+1}_{i,k-1} & \tilde{w}^{n+1}_{i,k}&  \\};
 \path
 (m-1-1) [densely dotted] edge (m-2-2);
 \path
(m-1-2) edge (m-2-2)
(m-2-1) edge (m-2-2)
(m-2-2)edge (m-2-3)
(m-3-2)edge (m-2-2);
 \path[-stealth]
 (m-2-2) edge (m-3-3);
\end{tikzpicture}

First, note that 
\[
-\frac{\Delta r}{4\lambda(\Delta x)^2}\cdot\frac{\max\big(\tilde{w}_{i,k}^{n+1} -\tilde{w}_{i-1,k}^{n+1},\tilde{w}_{i,k}^{n+1} -\tilde{w}_{i+1,k}^{n+1},0 \big)^2 }{ \tilde{w}_{i,k}^{n+1} -\tilde{w}_{i,k-1}^{n+1} }
\]
is nonincreasing in both $\tilde{w}_{i,k-1}^{n+1},\tilde{w}_{i-1,k}^{n+1}$ and $ \tilde{w}_{i+1,k}^{n+1}$. Take now $h>0$ small enough such that $|\tilde{w}^{n+1}_{i,k}-\tilde{w}^{n+1}_{i,k+1}|\leq 1/2$. Then,
\[
\tilde{w}_{i,k}^{n+1}-\tilde{w}_{i,k+1}^{n+1}-(\tilde{w}_{i,k}^{n+1} -\tilde{w}_{i,k+1}^{n+1} \big)^2
\]
is nonincreasing in $\tilde{w}_{i,k}^{n+1} -\tilde{w}_{i,k-1}^{n+1}$, because the function $x-x^2$ is increasing in $x$ for $-1/2\leq x\leq 1/2$. Since the first term is also nonincreasing in  $\tilde{w}_{i,k}^n$, we have thus proved that the scheme is (unconditionally) monotone, for  $|w^{n+1}_{i,k}-w^{n+1}_{i,k+1}|\leq 1/2$. It remains to prove its consistency and local stability.  Classical computations using the Taylor expansion again yield: 
\begin{align*}
 \frac{\tilde{w}_{i,k+1}^{n+1}+\tilde{w}^{n+1}_{i,k-1}-2\tilde{w}_{i,k}^{n+1}}{(\Delta r)^2}&=\tilde{w}_{rr}((n+1)\Delta t, i\Delta x, k\Delta r)\\
 &\quad+\>\frac1{12}\tilde{w}_{rrrr}((n+1)\Delta t, i\Delta x, k\Delta r) (\Delta r)^2+ o(\Delta r)^2,\\
 \frac{\tilde{w}_{i,k+1}^{n+1}-\tilde{w}_{i,k}^{n+1} }{\Delta r}&=\tilde{w}_r((n+1)\Delta t, i\Delta x, k\Delta r)+\frac12\tilde{w}_{rr}((n+1)\Delta t, i\Delta x, k\Delta_r)\Delta r\\
 &\quad+\>o(\Delta r),\\
 \frac{\tilde{w}_{i,k}^{n+1}-\tilde{w}_{i,k-1}^{n+1} }{\Delta r}&=\tilde{w}_r((n+1)\Delta t, i\Delta x, (k-1)\Delta r)\\
          &\quad+\>\frac12\tilde{w}_{rr}((n+1)\Delta t, i\Delta x, (k-1)\Delta_r)\Delta r+o(\Delta r),\\
  \frac{\tilde{w}_{i+1,k}^{n+1}-\tilde{w}_{i,k}^{n+1}}{\Delta x}&=\tilde{w}_x((n+1)\Delta t, i\Delta x, k\Delta r)+\frac12\tilde{w}_{xx}((n+1)\Delta t, i\Delta x, k\Delta r)\Delta x\\
  &\quad+\>o(\Delta x),\\
   \frac{\tilde{w}_{i,k}^{n+1}-\tilde{w}_{i-1,k}^{n+1}}{\Delta x}&=\tilde{w}_x((n+1)\Delta t, (i-1)\Delta x, k\Delta r)\\
   &\quad+\>\frac12\tilde{w}_{xx}((n+1)\Delta t, (i-1)\Delta x, k\Delta r)\Delta x+o(\Delta x),\\
  \frac{\tilde{w}_{i,k}^{n+1}-\tilde{w}_{i,k}^n }{\Delta_t}&=\tilde{w}_t((n+1)\Delta t, i\Delta x, k\Delta r)+\frac12\tilde{w}_{tt}((n+1)\Delta t, i\Delta x, k\Delta_r)\Delta t\\
  &\quad+\>o(\Delta t).
 \end{align*}
 Note that here again the truncation error is only of order one in each parameter for the approximation of the first derivatives. However, this order will be weakened because of  implicit computation of the corresponding terms. Hence, the consistency of the scheme follows from the continuity of the auxiliary HJB operator. To prove its local stability, we have to require that $\sigma\Delta x/\Delta r$ is bounded. We use the fact that 
\[
\max\bigg\{\bigg| \frac{\tilde{w}_{i,k+1}^{n+1}-\tilde{w}_{i,k}^{n+1} }{\Delta r}\bigg|,\bigg|\frac{\tilde{w}_{i,k}^{n+1}-\tilde{w}_{i,k-1}^{n+1} }{\Delta r}\bigg|\bigg\}\leq\overline{K}_O,
\]
on a bounded open set $O$. Due to Assumption \ref{lo}, we also have that
\[
\max\bigg\{\bigg| \frac{\tilde{w}_{i,k}^{n+1}-\tilde{w}_{i-1,k}^{n+1} }{\Delta x}\bigg|,\bigg|\frac{\tilde{w}_{i,k}^{n+1}-\tilde{w}_{i+1,k}^{n+1} }{\Delta x}\bigg|\bigg\}\leq L_O.
\]
 Hence,  expressing the differences as follows:
\begin{IEEEeqnarray*}{rCl}
\tilde{w}_{i,k}^{n+1}-\tilde{w}_{i,k}^n&=&\frac{\Delta t}2\bigg(\frac{i\Delta x\sigma}{\Delta r}\bigg)^2 \big(\tilde{w}_{i,k}^{n+1} -\tilde{w}_{i,k-1}^{n+1} +\tilde{w}_{i,k}^{n+1}-\tilde{w}_{i,k+1}^{n+1}\\
&& -\>\big(\tilde{w}_{i,k}^{n+1} -\tilde{w}_{i,k+1}^{n+1},0 \big)^2\big)\\
&&-\>\frac{\Delta t\Delta r}{(\Delta x)^2}\frac{\max\big(\tilde{w}_{i,k}^{n+1} -\tilde{w}_{i-1,k}^{n+1},\tilde{w}_{i,k}^{n+1} -\tilde{w}_{i+1,k}^{n+1},0 \big)^2 }{\tilde{w}_{i,k}^{n+1} -\tilde{w}_{i,k-1}^{n+1} },\\
\end{IEEEeqnarray*}
 we finally deduce
\begin{align*}
\max_{i,k\in\I_O} |\tilde{w}_{i,k}^{n+1}|&\leq\max_{i,k\in\I_0} |\tilde{w}_{i,k}^0|+\frac{3n\Delta t}8\bigg(\frac{i\Delta x\sigma}{\Delta r}\bigg)^2+n\Delta t\frac{ L_O^2}{\overline{K}_O }\\
&=\max_{i,k\in\I_0} |\tilde{w}_{i,k}^0|+\frac{3T}8\bigg(\frac{i\Delta x\sigma}{\Delta r}\bigg)^2+T\frac{ L_O^2}{\overline{K}_O}\\
&\leq \max_{i,k\in\I_0} |\tilde{w}_{i,k}^0|+\frac{3T|I_O|}8\bigg(\frac{\Delta x\sigma}{\Delta r}\bigg)^2+T\frac{ L_O^2}{\overline{K}_O} \\
&\leq \max_{i,k\in\I_0} |\tilde{w}_{i,k}^0|+\frac{3T|I_O|}8C^2+T\frac{ L_O^2}{\overline{K}_O},
\end{align*}
where $C\geq\sigma\Delta x/\Delta r$. Hence, this proves the stability of the scheme. Thus, the implicit scheme considered converges to the viscosity solution of \eqref{hab}.

In a more general framework (i.e. $b\neq0$ and $f$ symmetric), as it was the case with the explicit scheme above, we can consider the following scheme:
\begin{IEEEeqnarray*}{rCl}
\IEEEeqnarraymulticol{3}{l}{S(h,\Delta t,\Delta x,\Delta r,\tilde{w}_{i,k}^{n+1},[\tilde{w}_{i+1,k}^{n+1}, \tilde{w}^{n+1}_{i-1,k}, \tilde{w}_{i,k+1}^{n+1}, \tilde{w}^{n+1}_{i,k-1}, \tilde{w}_{i,k}^{n}])}\\
&=&\frac{\tilde{w}_{i,k}^{n+1}-\tilde{w}_{i,k}^n}{\Delta t}+\frac12\bigg(\frac{i\Delta x\sigma}{\Delta r}\bigg)^2 \big(\tilde{w}_{i,k}^{n+1} -\tilde{w}_{i,k-1}^{n+1} +\tilde{w}_{i,k}^{n+1}-\tilde{w}_{i,k+1}^{n+1} \\
&&-\>\big(\tilde{w}_{i,k}^{n+1} -\tilde{w}_{i,k+1}^{n+1} \big)^2\big)-b\cdot i\,\Delta x\, F_{b,k}(\tilde{w}_{i,k}^{n+1})\\
&&-\>\frac{\tilde{w}_{i,k}^{n+1} -\tilde{w}_{i,k-1}^{n+1} }{\Delta r}f^*\bigg(\frac{\Delta r}{\Delta x}\frac{\max\big(\tilde{w}_{i,k}^{n+1} -\tilde{w}_{i-1,k}^{n+1},\tilde{w}_{i,k}^{n+1} -\tilde{w}_{i+1,k}^{n+1},0 \big) }{ \tilde{w}_{i,k}^{n+1} -\tilde{w}_{i,k-1}^{n+1} }\bigg),\\
 \tilde{w}^0_{i,k}&=&0,                      
\end{IEEEeqnarray*}
where 
\begin{equation*}
F_{b,k}(\tilde{w}_{i,k}^{n+1})=\begin{cases}
                                             \frac{\tilde{w}_{i,k+1}^{n+1} -\tilde{w}_{i,k}^{n+1} }{\Delta r},& \text{if}\; \sgn(b\cdot i)>0,\\
                                            \frac{\tilde{w}_{i,k}^{n+1} -\tilde{w}_{i,k-1}^{n+1} }{\Delta r},& \text{if}\; \sgn(b\cdot i)\leq0.\\
                      \end{cases}  
\end{equation*}         
In analogy to the previous argumentation, we can prove that this scheme is again (unconditionally) nonincreasing in
 $\tilde{w}_{i-1,k}^{n+1},\;\tilde{w}_{i+1,k}^{n+1},\;\tilde{w}_{i,k-1}^{n+1},\;\tilde{w}_{i,k+1}^{n+1}$ and $\tilde{w}_{i,k}^n$  (when taking $h>0$ small enough such that $|\tilde{w}^{n+1}_{i,k}-\tilde{w}^{n+1}_{i,k+1}|\leq 1/2$), and it is therefore  monotone. \\
The consistency of the scheme can be proved in the same manner as beforehand,  by using the preceding Taylor expansions and the fact that both functions $\max$ and $f^*$ are continuous.\\
Using step by step the arguments and computations used to prove the local stability of the explicit version of this scheme also yields its local stability (where here again we have to impose suitable restrictions on $\Delta x$ and $\Delta r$). Hence, this scheme converges to the unique viscosity solution, provided that a method to compute the implicit terms is given.
\section{Numerical examples}
In this section, we provide an application of the preceding results. Implementing our implicit schemes is a challenging task, due to mainly the following two reasons.  First, classical computations in the spirit of the Newton-Raphson method would become rather involved in our case (because of the nonlinear part). This is due to the fact that, although the quadratic term can be linearized in order to make the task easier, there is still a quotient term to be dealt with. Second, the number of implicit variables to compute at each stage (five terms, as it can be seen in its corresponding stencil above) is another reason why we shall consider here only explicit schemes to visualize the value function of our maximization problem. 

Nevertheless, even in the case of explicit schemes, we still face some issues in our modeling. For example, our initial condition involves exponential growth, which means that taking $T$  small leads to large terms in the exponent. Since in most of the available computer programs we cannot use values larger than $\exp(1000)$, no reasonable results are displayed. For instance,  Matlab displays "\texttt{Inf}"  for $\log(\exp(1000))$, instead of displaying $1000$. Moreover, as we consider only bounded domains for the schemes, we  have to impose boundary conditions, which results in approximation errors. As mentioned above (see Remark \ref{bcs}), we will use the approximated value of $W$ with $R_0$  taken large enough. However,  we cannot take it as large as one wants to (see previous argumentation). Last but not least, the evaluation of the lower bound $K_O$ of the partial derivative $W_r$ presents another issue, since the latter one, which is in general difficult to obtain, is necessary to impose a CFL condition on the grid parameters. \\

\subsection{Exponential value function}

Let us start with approximating a known solution. In particular, we will thus show the accuracy of our scheme. In \citet{SS07}, we have the following explicit formula for the value function of the problem when considering the one-dimensional case with $f(x)=\lambda x^2,\;\lambda>0$, and $u(x)=-\exp(-Ax),\;A>0$:
\begin{equation*}
V(T,X_0,R_0)=-\exp\bigg(-AR_0+X_0^2\sqrt{\frac{\lambda A^3\sigma^2}2}\coth\bigg(T\sqrt{\frac{A\sigma^2}{2\lambda}}\bigg)\bigg).
\end{equation*}
\begin{figure}
\includegraphics[width=16cm, height=9cm]{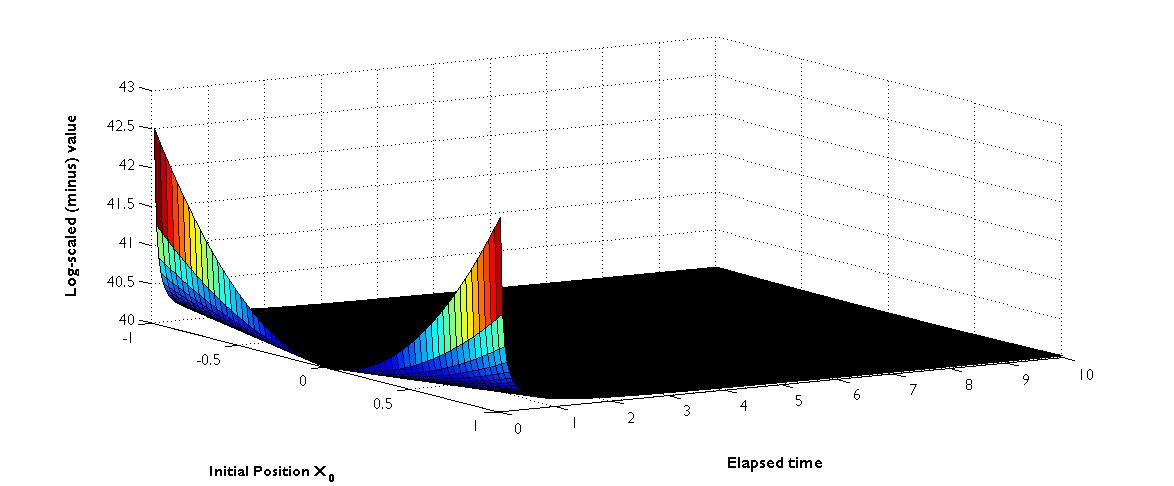}
\caption{Logarithmic representation of the value function (negative values)}
\label{fig 5.1}
\end{figure}
In Figure \ref{fig 5.1}, we show $\log(-V)$ for $R_0=1,\; \lambda=0.1,\;A=1$ and $\sigma=0.1$. We now wish to approximate $$w(T,X_0,R_0):=\log(-V)(T,X_0,R_0)=-AR_0+X_0^2\sqrt{\frac{\lambda A^3\sigma^2}2}\coth\bigg(T\sqrt{\frac{A\sigma^2}{2\lambda}}\bigg),$$    with the help of the following explicit scheme:
\begin{IEEEeqnarray*}{rCl}
\IEEEeqnarraymulticol{3}{l}{S(h,\Delta t,\Delta x,\Delta r,\tilde{w}_{i,k}^{n+1},[\tilde{w}_{i+1,k}^n, \tilde{w}^n_{i-1,k}, \tilde{w}_{i,k+1}^n, \tilde{w}^n_{i,k-1}, \tilde{w}_{i,k}^{n}])}\\
&=&\frac{\tilde{w}_{i,k}^{n+1}-\tilde{w}_{i,k}^n}{\Delta t}+\frac12\bigg(\frac{i\Delta x\sigma}{\Delta r}\bigg)^2 \big(\tilde{w}_{i,k}^n -\tilde{w}_{i,k-1}^n +\tilde{w}_{i,k}^n-\tilde{w}_{i,k+1}^n\\
&& -\>\big(\tilde{w}_{i,k}^n -\tilde{w}_{i,k+1}^n \big)^2-\frac{\Delta r}{4\lambda(\Delta x)^2}\cdot\frac{\max\big(\tilde{w}_{i,k}^n -\tilde{w}_{i-1,k}^n,\tilde{w}_{i,k}^n -\tilde{w}_{i+1,k}^n,0 \big)^2 }{\tilde{w}_{i,k}^n -\tilde{w}_{i,k-1}^n },\\
 \tilde{w}^0_{i,k}-\tilde{u}^0_{i,k}&=&0.                               
\end{IEEEeqnarray*}
\begin{figure}
\centering
\includegraphics[width=6.5in]{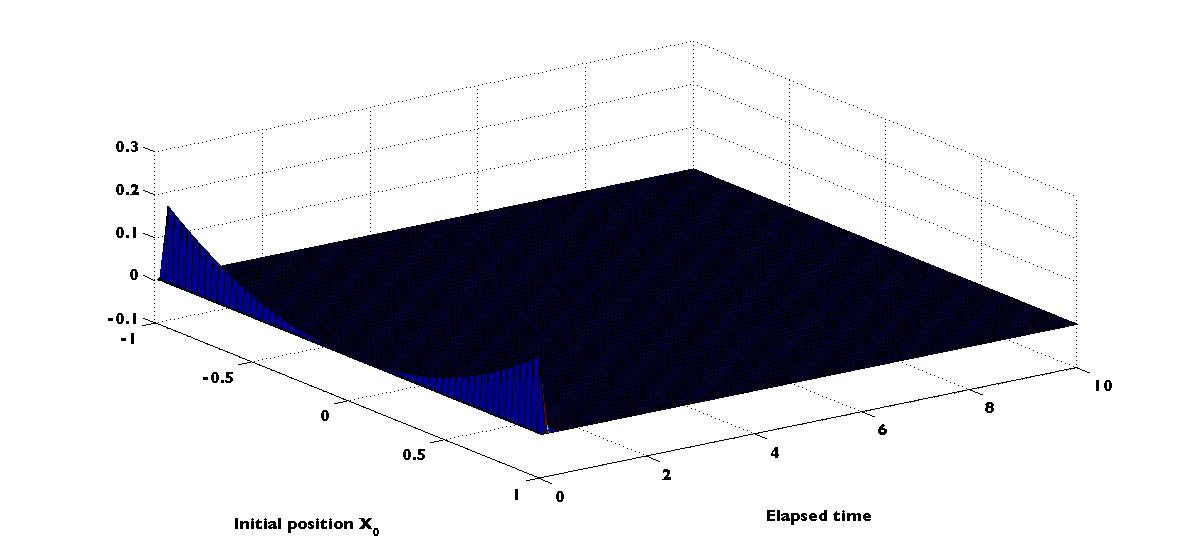}
\caption{Implementation of the real solution in our scheme} 
\label{fig2}
\end{figure}
We cannot directly start with $n=1$ as proposed above, since both $\tilde{w}^0_{i,k}$ and $\tilde{u}^0_{i,k}$ are undefined ($=\infty$), only their differences being defined and equal to 0. Moreover, we will need to impose some boundary conditions.\\
First, note that in this simple case $w_r=-A$, and hence $K_O=A$. Therefore, our CFL condition $\eqref{cfl1}$ is here given by
\begin{equation}
\frac{3\Delta t}{8\lambda}\bigg(\frac{(X_O\sigma)^2(\Delta x)^2A^2+\Delta r}{(\Delta r)^2(\Delta x)^2 A^2}\bigg)\leq 1.\label{cfl2}
\end{equation}
In the following, we will show that the preceding CFL condition was taken rather too restrictive, and our scheme does not need to necessarily fulfill it in order to converge. Subsequently we set
$$O=\;]0.04,10]\;\times\;]-1,1[\;\times\;]-50,50[,\; dr=0.833,\; dt=0.04\;\text{and}\; dx=0.0333.$$ 
We  show the consistency of the scheme by implementing the real solution of \eqref{hjba}, as shown in figure \ref{fig2}. With an absolute value of at most $0.18$, this scheme seems to be very consistent.
Using Proposition \ref{iws}, we set  the following initial condition
\[
w_{i,k}^{1}=\log(B-u(k\Delta r-(i\Delta x)^2/n\Delta t))= -A (k\Delta r+\lambda(i\Delta x)^2/(n\Delta t)).
\]
 We also have to add boundary conditions in our scheme. We define them as follows: denoting by $\pm x_{\max}:=\pm i_{m} \cdot\Delta x$ and $\pm r_{\max}:=\pm k_{m}\cdot \Delta r$ the extreme values taken by $x$ and $r$, respectively, on the grid, we have to set for $n\geq1$:
\[
w_{\pm i_m,k}^{n}=\log(B-u(k\Delta r-(x_{\max})^2/n\Delta t)),
\]
\[
w_{i,\pm k_m}^{n}=\log(B-u(\pm r_{\max}-(i\Delta x)^2/n\Delta t)).
\]
As already argued in Remark \ref{bcs}, this setting could only work out for large values of $R_0$, not for large values of $X_0$. However, in this particular case, this represents a very good setting of the boundary conditions (see figure \ref{fig3}). We also display  the approximation error   (figure \ref{fis}). With at most 2.5 \% error for small time $T$ (and at most 0.03\% from time $T$ larger than 2), our scheme seems to give a very good approximation in this particular case, even if our CFL condition is not satisfied  (the left-hand side term of \eqref{cfl2} yields here 162.0043). \\
\begin{figure}
\raggedleft
\includegraphics[width=1.1\textwidth]{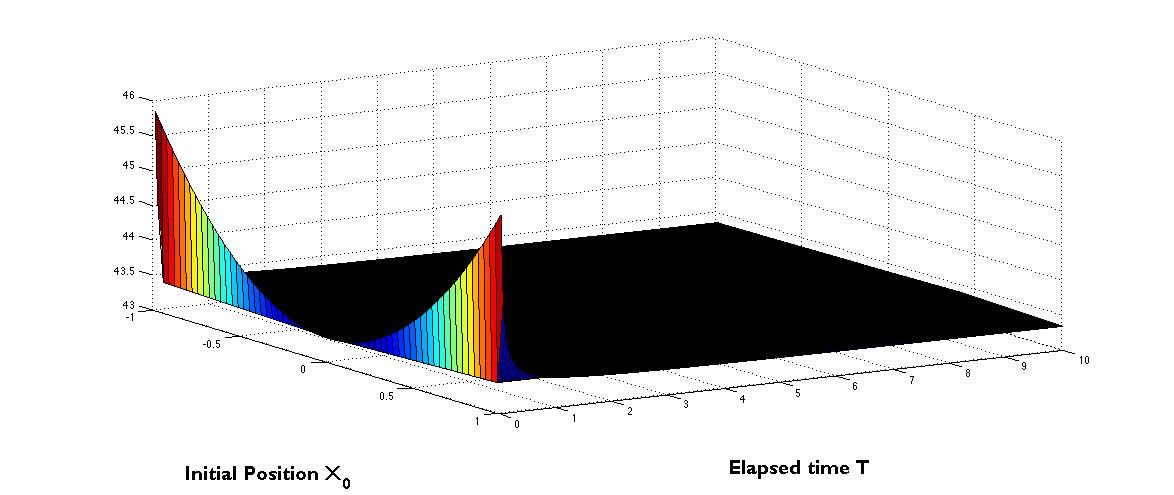}
\caption{Value returned by the scheme} 
\label{fig3}
\end{figure}

\begin{figure}
\raggedleft
\includegraphics[width=1.1\textwidth]{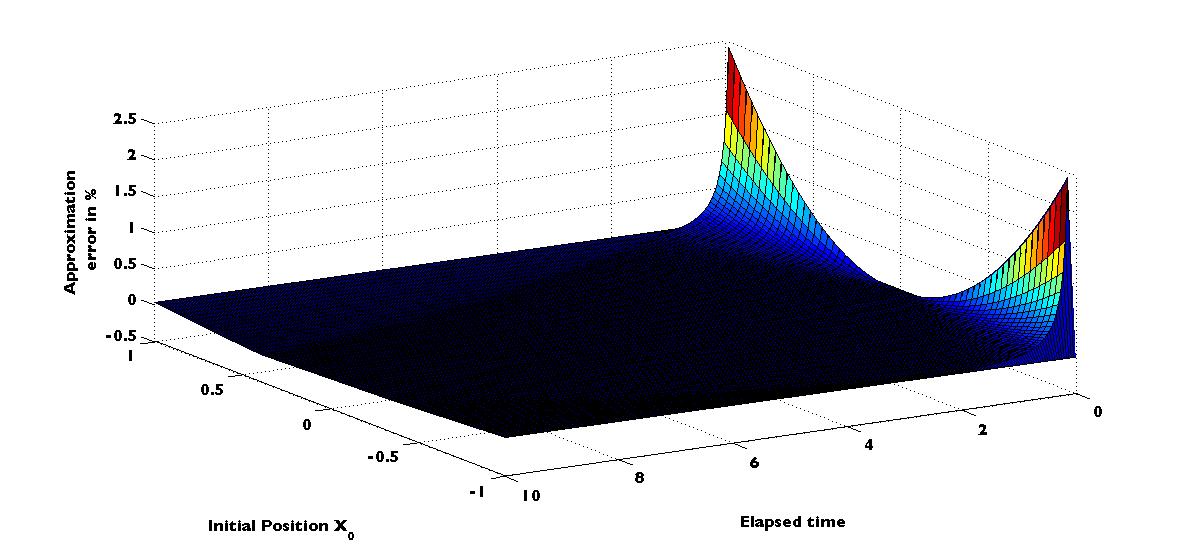}
\caption{Approximation error for $\lambda=0.1,\sigma=0.1,R_0=-43.3333$ and $A=1$} 
\label{fis}
\end{figure}

Nevertheless, things are not working so well when $B$ is not any longer supposed to be equal to zero, since we now have  to deal with the second partial derivative of $w$ in its third parameter  (whereas before it was equal to zero), and since the CFL condition can "explode", due to exponential terms. Let us fix it. To this end, we start with computing a strictly negative upper bound $K_O$, on a bounded set $O$, for $w_r$ (in order to set a CFL condition in our scheme). We compute
\[
w_r( T,X_0,R_0)=\frac{-A\exp\bigg(-AR_0+X_0^2\sqrt{\frac{\lambda A^3\sigma^2}2}\coth\bigg(T\sqrt{\frac{A\sigma^2}{2\lambda}}\bigg)\bigg)}{1+\exp\bigg(-AR_0+X_0^2\sqrt{\frac{\lambda A^3\sigma^2}2}\coth\bigg(T\sqrt{\frac{A\sigma^2}{2\lambda}}\bigg)\bigg)}.
\]
Taking $O=\;]\Delta t;T]\;\times\;]X_{\min};X_{\max}[\;\times\;]R_{\min};R_{\max}[$, and using the fact that $
x\mapsto -\frac{x}{1+x}$ is strictly decreasing for $x>0$,
we infer the following upper bound:
\[
-K_O:=\frac{-A\exp\bigg(-AR_{\max}+x_{\min}^2\sqrt{\frac{\lambda A^3\sigma^2}2}\coth\bigg(T \sqrt{\frac{A\sigma^2}{2\lambda}}\bigg)\bigg)}{1+\exp\bigg(-AR_{\max}+x_{\min}^2\sqrt{\frac{\lambda A^3\sigma^2}2}\coth\bigg(T \sqrt{\frac{A\sigma^2}{2\lambda}}\bigg)\bigg)}\geq w_r( T,X_0,R_0),                           
\]
where $x^2_{\min}:=\inf_{x\in\;]X_{\min};X_{\max}[}x^2$.
Calculating this value of $K_O$  for our parameters $R_{\max}=50,\; A=5, \;\lambda=0.1,\;x^2_{\min}=0$ gives us $K_O\leq 10^{-108}$ and a value of the left-hand side of \eqref{cfl2} larger than $10^{217}$!  To remedy to this issue, while maintaining our parameters $\lambda,\sigma$ and $A$, we have to allow only negative values for $R_0$. For instance, we may take $R_0\in\;]-50,-40[$. In order to set the CFL condition, we take moreover $\Delta t=1/1250$.
When implementing the real value in our scheme, we get at most the value $4$ for a time $T$ smaller than one quarter. After this, things are getting better and we have values much closer to zero, more precisely, whose orders are at most $10^{-3}$ (see figure \ref{fir}). Further, the approximation error of the real solution seems to be higher here, as represented in figure \ref{fea}.  With the preceding parameters, the left-hand side of \eqref{cfl2} is equal to 0.9009.
\begin{figure}
\raggedleft
\includegraphics[width=16.5cm,height=9cm]{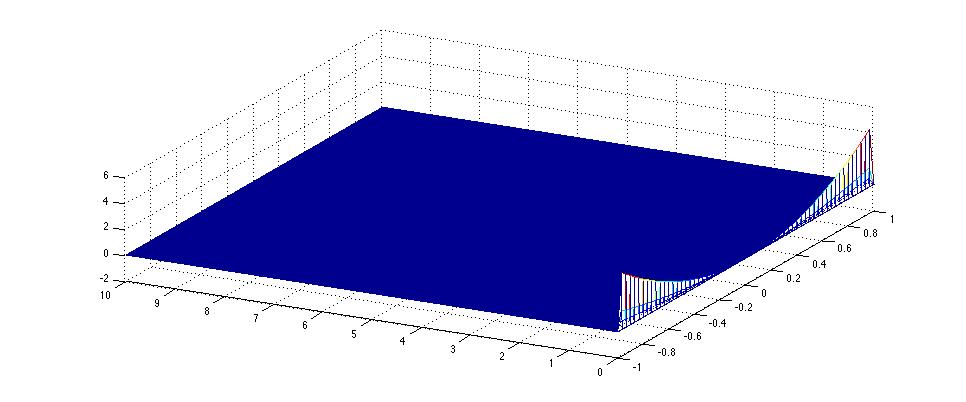}
\caption{Implementation of the real solution in our scheme for $B=1$.} 
\label{fir}
\end{figure}
\begin{figure}
\raggedleft
\includegraphics[width=16.5cm,height=9cm]{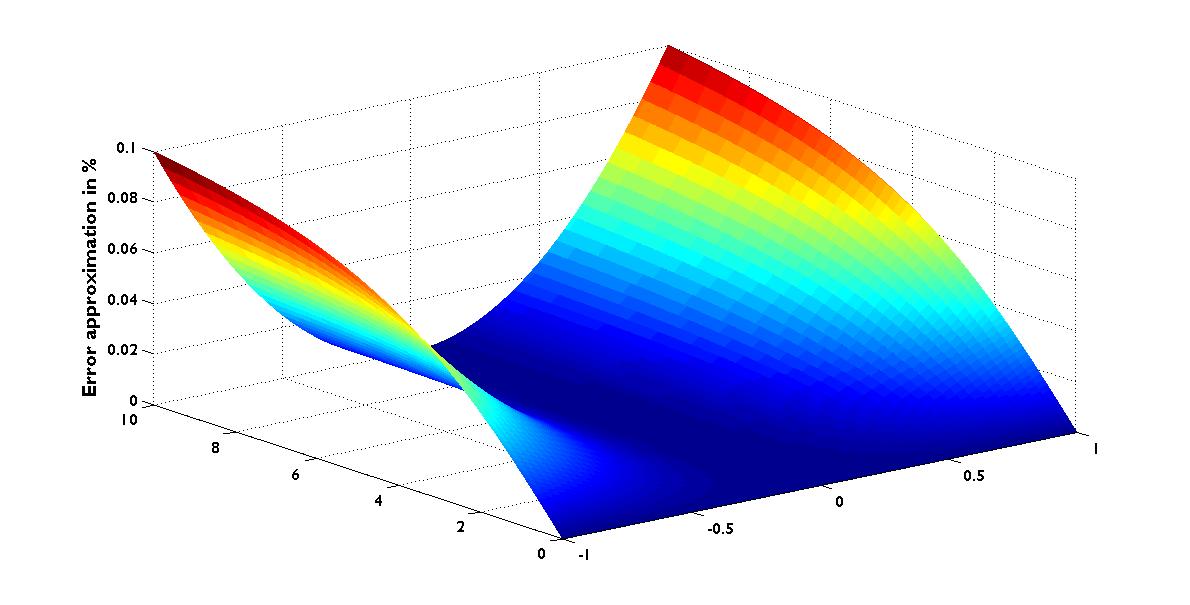}
\caption{Approximation error for $B=1$.} 
\label{fea}
\end{figure}

\subsection{Convex combinations of exponential utility functions}
In this subsection, we suppose that there exist $A_2>1>A_1>0$ and $\mu\in\;]0,1[$ such that 
\[
u(x)=\mu\big(1/A_1-\exp(-A_1x)\big)-(1-\mu)\exp\big(-A_2x).
\]
With this formulation of $u$, no well-known explicit formula for the associated value function (and hence for the solution of the associated auxiliary equation) can be given. Note that taking the corresponding convex combination of exponential value functions gives us only a supersolution of the corresponding HJB equation, since the functional on the left-hand side of \eqref{hjb} is subadditiv. Our goal in this section is to give an approximated value of the viscosity solution of $\eqref{hab}$. As discussed previously, we are going to use the explicit scheme to  achieve this. Let us start by finding a lower bound $K_O$ for $w_r$. To this end, we use inequalities \eqref{ieu'} and  and \eqref{ubd1} to infer
\begin{IEEEeqnarray*}{rCl}
w_r=\frac{-V_r}{B-V}&=&\frac{\E\Big[-u'\Big(\cR^{\xi^*}_T\Big)\Big]}{B-\E\Big[u\Big(\cR^{\xi^*}_T\Big)\Big]}\\
&\leq& \frac{-1- \E\Big[\exp(-A_1\Big(\cR^{\xi^*}_T\Big)\Big]}{B-V_2(T,X_0,R_0)}\\
&\leq& \frac{V_1(T,X_0,R_0)-(1+1/A_1)}{B-V_2(T,X_0,R_0)},           
\end{IEEEeqnarray*}
and in the case where $f(x)=\lambda x^2$, we get 
\[
w_r( T,X_0,R_0)=-\frac{1+\exp\bigg(-A_1R_0+X_0^2\sqrt{\frac{\lambda A_1^3\sigma^2}2}\coth\bigg(T\sqrt{\frac{A_1\sigma^2}{2\lambda}}\bigg)\bigg)}{B+\exp\bigg(-A_2R_0+X_0^2\sqrt{\frac{\lambda A_2^3\sigma^2}2}\coth\bigg(T\sqrt{\frac{A_2\sigma^2}{2\lambda}}\bigg)\bigg)}.
\]
For the sake of simplicity, take $B=1$ (consequently, we will have to take $\mu/A_1<1$ in order for $\log(B-u)$ to be well-defined), then we obtain with $$O=\;]\Delta t;T]\;\times\;]-X_{\max};X_{\max}[\;\times\;]0;R_{\max}[,$$
the following lower bound: 
\[
w_r( T,X_0,R_0)\leq-\frac{1+\exp(-A_1R_{\max})}{1+\exp\bigg(-A_2R_{\max}+X_{\max}^2\sqrt{\frac{\lambda A_2^3\sigma^2}2}\coth\bigg(T\sqrt{\frac{A_2\sigma^2}{2\lambda}}\bigg)\bigg)}=:-K_O.
\]
In the sequel we set:
$$O=\;]0.04,10]\;\times\;]-2,2[\;\times\;]0,20[,\; dr=0.8,\; dx=0.1,\; dt=0.013.$$ 
In Figure \ref{fig7},  the approximate value of the solution of \eqref{hjba} is displayed.
\begin{figure}
\raggedleft
\includegraphics[width=1.1\textwidth]{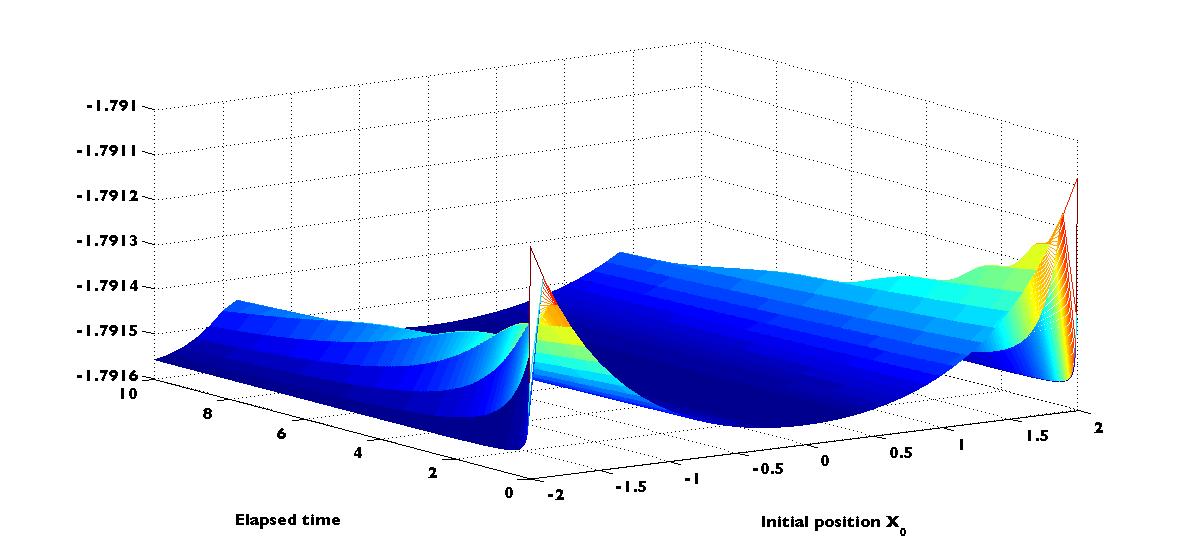}
\caption{Approximated value of the solution of \eqref{hjba}} 
\label{fig7}
\end{figure}
In Figure \ref{fig8}, we give an approximated representation of the value function of \eqref{hjb}. Note that the approximate displayed value function is concave for a fixed time when $x$ takes  values far enough from the boundaries (e.g., $x\in [-1.45;1.45]$), which is in concordance with Theorem \ref{v_r}.
\newpage
\begin{figure}
\raggedleft
\includegraphics[width=1.1\textwidth]{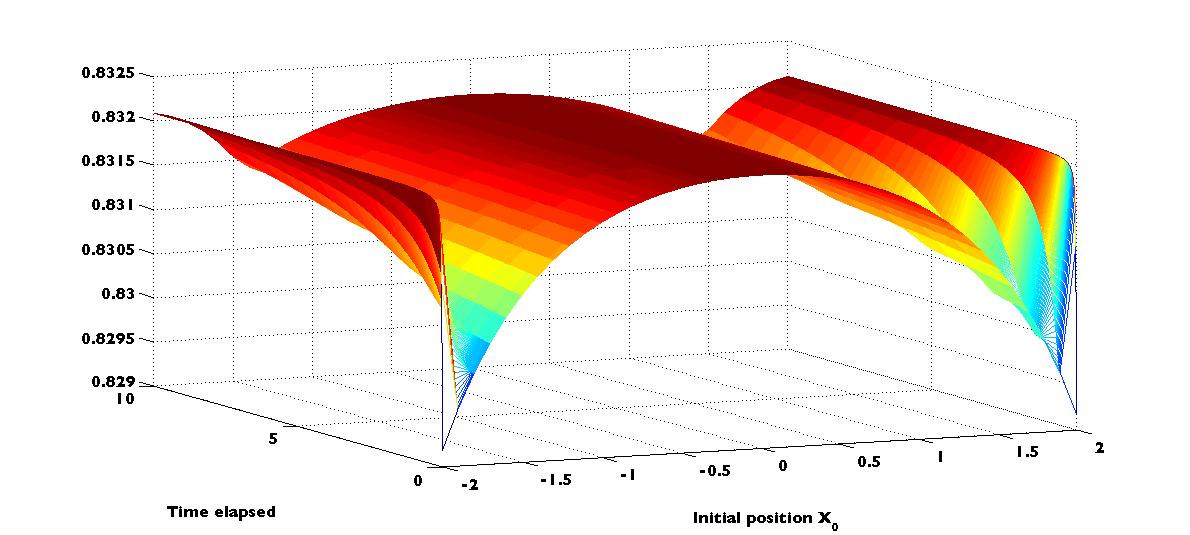}
\caption{Approximated value of the solution of \eqref{hjb}} 
\label{fig8}
\end{figure}

\bibliographystyle{plainnat}

\begin{thebibliography}{13}
\providecommand{\natexlab}[1]{#1}
\providecommand{\url}[1]{\texttt{#1}}
\expandafter\ifx\csname urlstyle\endcsname\relax
  \providecommand{\doi}[1]{doi: #1}\else
  \providecommand{\doi}{doi: \begingroup \urlstyle{rm}\Url}\fi

\bibitem[Barles and Jakobsen(2002)]{BJ02n}
G.~Barles and E.~R. Jakobsen.
\newblock On the convergence rate of approximation schemes for
  {H}amilton-{J}acobi-{B}ellman equations.
\newblock \emph{M2AN Math. Model. Numer. Anal.}, 36\penalty0 (1):\penalty0
  33--54, 2002.
\newblock ISSN 0764-583X.
\newblock \doi{10.1051/m2an:2002002}.
\newblock URL \url{http://dx.doi.org/10.1051/m2an:2002002}.

\bibitem[Barles and Souganidis(1991)]{BS91}
G.~Barles and P.~E. Souganidis.
\newblock Convergence of approximation schemes for fully nonlinear second order
  equations.
\newblock \emph{Asymptotic Anal.}, 4\penalty0 (3):\penalty0 271--283, 1991.

\bibitem[Bonnans et~al.(2004)Bonnans, Ottenwaelter, and Zidani]{ZB04n}
J.~F. Bonnans, {{\'E}}. Ottenwaelter, and H.
  Zidani.
\newblock A fast algorithm for the two dimensional {HJB} equation of stochastic
  control.
\newblock \emph{M2AN Math. Model. Numer. Anal.}, 38\penalty0 (4):\penalty0
  723--735, 2004.
\newblock ISSN 0764-583X.
\newblock \doi{10.1051/m2an:2004034}.
\newblock URL \url{http://dx.doi.org/10.1051/m2an:2004034}.

\bibitem[Briani et~al.(2012)Briani, Camilli, and Zidani]{ZCB12n}
A.~Briani, F.~Camilli, and H.~Zidani.
\newblock Approximation schemes for monotone systems of nonlinear second order
  partial differential equations: convergence result and error estimate.
\newblock \emph{Differ. Equ. Appl.}, 4\penalty0 (2):\penalty0 297--317, 2012.
\newblock ISSN 1847-120X.
\newblock \doi{10.7153/dea-04-18}.
\newblock URL \url{http://dx.doi.org/10.7153/dea-04-18}.

\bibitem[Fahim et~al.(2011)Fahim, Touzi, and Warin]{FT11}
A.~Fahim, N.~Touzi, and X.~Warin.
\newblock A probabilistic numerical method for fully nonlinear parabolic
  {PDE}s.
\newblock \emph{Ann. Appl. Probab.}, 21\penalty0 (4):\penalty0 1322--1364,
  2011.
\newblock ISSN 1050-5164.
\newblock \doi{10.1214/10-AAP723}.
\newblock URL \url{http://dx.doi.org/10.1214/10-AAP723}.

\bibitem[Lazgham(2015{\natexlab{a}})]{LM15}
M.~Lazgham.
\newblock Regularity properties in a state-constrained expected utility
  maximization problem.
\newblock \emph{Preprint, available online at arXiv.org}, 2015{\natexlab{a}}.
\newblock URL \url{http://arxiv.org/abs/1510.03079}.

\bibitem[Lazgham(2015{\natexlab{b}})]{LM15n}
M.~Lazgham.
\newblock Viscosity properties with singularities in a state-constrained
  expected utility maximization problem.
\newblock \emph{Preprint, available online at arXiv.org}, 2015{\natexlab{b}}.
\newblock URL \url{http://arxiv.org/abs/1510.03079}.

\bibitem[Oberman(2006)]{O06nn}
A.~M. Oberman.
\newblock Convergent difference schemes for degenerate elliptic and parabolic
  equations: {H}amilton-{J}acobi equations and free boundary problems.
\newblock \emph{SIAM J. Numer. Anal.}, 44\penalty0 (2):\penalty0 879--895
  (electronic), 2006.
\newblock ISSN 0036-1429.
\newblock URL \url{http://dx.doi.org/10.1137/S0036142903435235}.

\bibitem[Pooley et~al.(2003)Pooley, Forsyth, and Vetzal]{PF03}
D.~M. Pooley, P.~A. Forsyth, and K.~R. Vetzal.
\newblock Numerical convergence properties of option pricing {PDE}s with
  uncertain volatility.
\newblock \emph{IMA J. Numer. Anal.}, 23\penalty0 (2):\penalty0 241--267, 2003.
\newblock ISSN 0272-4979.
\newblock URL \url{http://dx.doi.org/10.1093/imanum/23.2.241}.

\bibitem[Schied and Sch{{\"o}}neborn(2007)]{SS07}
A.~Schied and T.~Sch{{\"o}}neborn.
\newblock Optimal portfolio liquidation for cara investors.
\newblock \emph{Munich Personal RePEc Archive}, 2007.
\newblock URL \url{http://mpra.ub.uni-muenchen.de/5075/1/MPRA_paper_5075.pdf}.

\bibitem[Schied et~al.(2010)Schied, Sch{{\"o}}neborn, and Tehranchi]{SST10nn}
A.~Schied, T.~Sch{{\"o}}neborn, and M.~Tehranchi.
\newblock Optimal basket liquidation for {CARA} investors is deterministic.
\newblock \emph{Appl. Math. Finance}, 17\penalty0 (6):\penalty0 471--489, 2010.
\newblock URL \url{http://dx.doi.org/10.1080/13504860903565050}.

\bibitem[Tourin(2011)]{Ta11}
A.~Tourin.
\newblock An introduction to finite difference methods for pdes in finance.
\newblock \emph{Book Chapter: Nizar Touzi, Optimal Stochastic Target problems,
  and Backward SDE, Fields Institute Monographs}, 29:\penalty0 201--212, 2011.

\bibitem[Warin(2013)]{W13}
X.~Warin.
\newblock Some non monotone schemes for hamilton-jacobi-bellman equations.
\newblock \emph{arXiv preprint arXiv:1312.5052}, 2013.

\end{thebibliography}

\end{document}